\documentclass[submission,copyright,creativecommons]{eptcs}
\providecommand{\event}{AiML 2026} 

\usepackage{aiml26}
\usepackage{subcaption}
\usepackage{iftex}

\ifpdf
  \usepackage{underscore}         
  \usepackage[T1]{fontenc}        
\else
  \usepackage{breakurl}           
\fi

\usepackage{amsmath,amssymb,amsfonts,amsthm,latexsym,tikz}
\usetikzlibrary{shapes.geometric,positioning,backgrounds,patterns}
\usepackage{mathrsfs}
\usepackage{graphicx}
\usepackage{standalone}

\title{Stratified Counterpossible Logic}
\author{Chen Huang
\institute{Institute of Logic and Cognition\\Department of Philosophy\\Sun Yat-sen University\\ Guangzhou, China}
\email{huangch58@mail2.sysu.edu.cn}
\and
Xuefeng Wen
\institute{Institute of Logic and Cognition\\Department of Philosophy\\Sun Yat-sen University\\ Guangzhou, China}
\email{wxflogic@gmail.com}
}

\newcommand{\titlerunning}{Stratified Counterpossible Logic}
\newcommand{\authorrunning}{Chen Huang \& Xuefeng Wen}

\hypersetup{
  bookmarksnumbered,
  pdftitle    = {\titlerunning},
  pdfauthor   = {\authorrunning},
  pdfsubject  = {EPTCS},               
  pdfkeywords = {counterpossible conditionals, stratified semantics, impossible worlds, non-vacuism.} 
}

\begin{document}
\maketitle

\begin{abstract}
This paper presents two logic systems, $\mathbf{SCP}$ and its extension $\mathbf{SCP1}$, to distinguish between different types of impossibility. The semantics use a stratified structure that partitions worlds into logic-normal worlds ($N$) and anti-logic worlds ($I$). By defining a metaphysical accessibility relation within $N$, the systems separate metaphysical impossibility from logical contradiction. This allows for logical reasoning to be maintained even when dealing with metaphysical impossibilities. To address the challenge of vacuism, $\mathbf{SCP1}$ implements a non-empty constraint on the selection function, ensuring that counterpossibles with impossible antecedents are not trivially true but depend on the connection between the antecedent and the consequent. We provide proofs for the soundness, completeness, and decidability of both systems. Finally, we indicate the possibility of applying this stratified approach to other modal domains, such as deontic or epistemic logic.
\end{abstract}

\section{Introduction}

Counterfactual semantics typically aim to balance formal simplicity
with intuitive plausibility. While the Stalnaker-Lewis systems succeed
for standard counterfactuals \cite{Stalnaker1968,Lewis1973,Lewis1981},
they yield counterintuitive results for counterpossibles—conditionals
with impossible antecedents.
\begin{enumerate}
\item If Hobbes had squared the circle, sick children in Afghanistan would
have been cured. 
\item If Hobbes had squared the circle, we would have been surprised. 
\end{enumerate}
For Stalnaker and Lewis, both conditionals are true because squaring
the circle is impossible. Intuitively, however, the first seems false
and the second true. This conflict between theory and intuition motivates
the debate between vacuism and non-vacuism.

Vacuism asserts that all counterpossibles are vacuously true. While
Williamson \cite{williamson2018Counterpossibles} defends this view
for its logical strength, it has difficulty accommodating mathematical
proofs by contradiction or metaphysical thought experiments. If every
impossible antecedent yields a trivial truth, we cannot distinguish
substantive hypotheses from absurd contradictions.

Non-vacuism, in contrast, argues that some counterpossibles are false
\cite{brogaard2013RemarksCounterpossibles}. To evaluate impossible
antecedents within possible-world semantics, Nolan \cite{nolan1997ImpossibleWorldsModest}
introduced impossible worlds and the Strangeness of Impossibility
Condition (SIC): any possible world is more similar to the actual
world than any impossible world. This framework allows for the evaluation
of counterpossibles while preserving classical logic for possible
antecedents.\footnote{For further research on formal logical systems, see \cite{berto2018WilliamsonCounterpossibles,weissFrontiersConditionalLogic}.} Weiss \cite{weiss2017SemanticsCounterpossibles} evaluated the relative
merits of sphere semantics and ceteris paribus logic, highlighting
the potential of weak systems to provide a uniform semantic framework. French \cite{french2024ReflectionVacuismStrangeness} proposed a conciliatory
approach. By employing the translation $t(\varphi>\psi):=\Diamond t(\varphi)\to(t(\varphi)>t(\psi))$,
French proved that all theorems of vacuist logic are theorems of non-vacuist
logic under translation. This addresses the concern that non-vacuism
necessarily weakens the logic by abandoning valid inference rules.

While these works progress in demarcating the possible from the impossible,
they do not sufficiently model the internal structure of the impossible
domain. Existing frameworks often conflate metaphysical impossibilities
(e.g., ``water is not $H_{2}O$'') with logical contradictions (e.g.,
$P\land\neg P$). This leads to a trade-off: to accommodate contradictions,
systems often sacrifice inference rules that are still needed for
reasoning about metaphysical hypotheses.

This paper argues for both a distinction between the possible and
impossible and a stratification of the impossible realm.\footnote{Similar stratified approaches exist; for example, Gillies \cite{gillies2007CounterfactualScorekeeping}
used scorekeeping to establish stratified strict conditional semantics
for context-sensitivity.} We propose Stratified Counterpossible Logic ($\mathbf{SCP}$), which
partitions worlds into logic-normal worlds ($N$) and anti-logical
worlds ($I$). By nesting a metaphysical accessibility relation ($R_{M}$)
within $N$, we create a three-layered structure that preserves logical
reasoning even when metaphysical necessities are violated.

\section{Philosophical Motivation}

Existing literature on non-vacuism (e.g., \cite{nolan1997ImpossibleWorldsModest,weissFrontiersConditionalLogic})
tends to adopt a binary perspective: on one side are possible worlds
that obey all rules, and on the other are impossible worlds that break
them. However, this approach implicitly assumes that all impossibilities
have the same logical status. Treating all impossibilities identically
forces logical systems into a dilemma when dealing with counterpossible
reasoning.

To characterize these differences, this paper adopts classical logic
as the base logic to define logically possible and logically impossible
worlds. \footnote{While many non-classical logics, such as paraconsistent or intuitionistic
logic, offer deep philosophical insights, we leave them for future
research to avoid overcomplicating the semantic model.} The logical space is partitioned into two regions:

\textbf{Logic-normal worlds ($N$):} This region contains all worlds
that are consistent under the base logic (classical logic). It includes
not only the actual world but also ``anti-essential'' worlds that
violate metaphysical laws yet still follow classical inference rules,
such as the Law of Excluded Middle and the Law of Non-Contradiction.

\textbf{Anti-logical worlds ($I$):} This region contains situations
that violate the norms of the base logic.

In this framework, worlds closed under non-classical logics, such
as paraconsistent worlds where $\varphi\land\neg\varphi$ holds, are
categorized as ``logically impossible'' relative to the base $N$.
While $I$ can accommodate diverse non-classical structures, this
paper treats it primarily as a domain where base logical norms fail.

Within the logically normal domain $N$, we distinguish between metaphysical
possibility and impossibility via the relation $R_{M}$. The actual
world and its $R_{M}$-accessible worlds constitute the metaphysically
possible region. Conversely, $R_{M}$-inaccessible worlds within $N$
represent ``anti-essential'' scenarios, such as ``water is XYZ''
or ``gold has atomic number 50.''

This stratified $N/R_{M}/I$ structure echoes philosophical practice
in scientific and metaphysical thought experiments. When assuming
antecedents that violate natural laws or essential properties, we
typically continue to reason using classical logic. This suggests
that such anti-metaphysical scenarios should remain within the logic-normal
domain $N$. Consider the following example:

\begin{example} If gold had atomic number 50, gold would still be
a metal. \end{example}

Metaphysically, gold necessarily and essentially has atomic number
79; thus, ``the atomic number of gold is 50'' is a metaphysical
impossibility. In unstratified counterpossible logics, because the
antecedent is impossible, it must point to an ``impossible world.''
Since impossible worlds are generally unconstrained by logic, one
might lose the inference that ``gold is a metal,'' which is based
on chemical classification. Intuitively, even if we imagine a change
in the atomic number of gold, we are still reasoning with classical
logic. Therefore, such ``anti-essential'' scenarios should be retained
within the logic-normal worlds ($N$).

\section{Stratified Counterpossible Logic \texorpdfstring{($\mathbf{SCP}$)}{(SCP)}}

This section presents Stratified Counterpossible Logic (henceforth, $\mathbf{SCP}$). The core intuition of $\mathbf{SCP}$ is to partition
the logical space into three levels: metaphysically possible worlds,
logically possible (but metaphysically impossible) worlds, and anti-logical
worlds.

$\mathbf{SCP}$ employs equivalence relations to characterize metaphysical
and logical possibilities, presupposing that metaphysical possibility
implies logical possibility.

\begin{definition}[Formal Language] Given a set of atomic propositions
$P$, the formulas of the language $\mathscr{L}$ are generated by
the following BNF: 
\[
\varphi::=p\mid\neg\varphi\mid(\varphi\to\varphi)\mid\Box_{M}\varphi\mid\Box_{L}\varphi\mid(\varphi>\varphi),
\]
where $p\in P$. In this paper, contradictions and tautologies refer
to those in classical logic, denoted by $\bot$ and $\top$ respectively.
\end{definition}

\subsection{Semantics}

We first provide the model-theoretic definition of $\mathbf{SCP}$.
The model is based on selection semantics for conditionals and introduces
stratified modal accessibility relations to characterize the inclusion
relationship between metaphysical and logical possibilities.

\begin{definition} An $\mathbf{SCP}$ model is a tuple $M=(W,N,R_{M},R_{L},f,V)$,
where: \label{def:scp-model} 
\begin{itemize}
\item $W$ is a non-empty set of worlds (including impossible worlds);
\item $N$ is a subset of $W$, consisting of normal worlds that follow the laws of classical
logic;
\item $I:=W-N$ is thus a set of anti-logical worlds that are not constrained
by classical logic;
\item $R_{M},R_{L}\subseteq N\times N$ are binary equivalence relations
defined on $N$, representing the metaphysical accessibility relation
and the logical accessibility relation, respectively; 
\item $f:N\times\mathscr{L}\to\mathcal{P}(W)$ is a selection function;
\item $V:\mathscr{L}\to\mathcal{P}(W)$ is a valuation function. 
\end{itemize}
\end{definition}
Note that for brevity $V$ is not merely defined on atoms, but on all formulas.

\begin{definition}[Truth Conditions for $\mathbf{SCP}$] The truth
relation $M,w\Vdash\varphi$ for a formula $\varphi$ at world $w$
in model $M$ is defined recursively as follows: \label{def:scp-truth}
\end{definition}
\begin{itemize}
\item In normal worlds ($w\in N$): 
\begin{itemize}
\item For an atomic proposition $p$, $M,w\Vdash p\iff w\in V(p)$. 
\item $M,w\Vdash\neg\varphi\iff M,w\nVdash\varphi$. 
\item $M,w\Vdash\varphi\to\psi\iff M,w\nVdash\varphi$ or $M,w\Vdash\psi$. 
\item $M,w\Vdash\Box_{M}\varphi\iff R_{M}(w)\subseteq[\varphi]$, where
$[\varphi]:=\{w\in W\mid M,w\Vdash\varphi\}$. 
\item $M,w\Vdash\Box_{L}\varphi\iff R_{L}(w)\subseteq[\varphi]$. 
\item $M,w\Vdash\varphi>\psi\iff f(w,\varphi)\subseteq[\psi]$. 
\end{itemize}
\item In anti-logical worlds ($w\in I$): 
\begin{itemize}
\item For any formula $\varphi\in\mathscr{L}$: $M,w\Vdash\varphi\iff w\in V(\varphi)$. 
\end{itemize}
\end{itemize}
\begin{remark} In $I$, the truth values of complex formulas (e.g.,
$\varphi\land\psi$) are no longer defined recursively. This allows
for logical impossibilities, such as $\varphi$ being true while $\varphi\lor\psi$
is false, or $\varphi\land\neg\varphi$ being true. \end{remark}

\begin{definition}[Semantic Constraints] For any $w\in N$, the
model $M$ must satisfy the following constraints: \label{def:scp-constraints}
\[
\begin{array}{@{}ll@{\qquad}ll@{}}
\text{(modal)} & R_{M}(w)\subseteq R_{L}(w) & 
\text{(soc-m)} & \text{If }w\in[\Diamond_{M}\varphi],\ \text{then }f(w,\varphi)\subseteq R_{M}(w) \\[4pt]
\text{(id)} & f(w,\varphi)\subseteq[\varphi] & 
\text{(sic)} & \text{If }w\in[\Diamond_{L}\varphi],\ \text{then }f(w,\varphi)\subseteq N
\end{array}
\]
\end{definition}

\begin{figure}[htbp]
    \centering
        \centering
        \resizebox{0.5\textwidth}{!}{
            \begin{tikzpicture}[
                outerworld/.style={rectangle, draw, minimum width=10cm, minimum height=7.5cm, ultra thick},
                normalzone/.style={ellipse, draw, fill=gray!5, minimum width=7.5cm, minimum height=5.5cm, thick},
                rlcluster/.style={ellipse, draw, dashed, blue!70, fill=blue!5, minimum width=3.5cm, minimum height=2.5cm, thick},
                rmclass/.style={circle, draw, dotted, black, fill=white, minimum size=1.2cm, inner sep=0pt}
            ]
                \node[outerworld] (W) at (0,0) {};
                \node[anchor=north west, font=\huge\bfseries] at (W.north west) {$W$};

                \node[font=\huge] at (-4, 3) {$I$};
                \node[font=\large, text=gray] at (-3.1, 2.6) {(Impossible Worlds)};

                \node[normalzone] (N) at (0,0) {};
                \node[anchor=north, font=\Large\bfseries] at (N.north) {$N$};

                \node[rlcluster] (RL1) at (-1.6, 0.3) {};
                \node[rlcluster] (RL2) at (1.7, -0.7) {};
                \node[blue, font=\large\bfseries] at (-1.5, 1.9) {$R_L$-cluster 1};
                \node[blue, font=\large\bfseries] at (1.8, 0.7) {$R_L$-cluster 2};

                \node[rmclass] at (-2.2, 0.4) {\large $R_M$};

                \node[rmclass] at (-0.9, -0.1) {\large $R_M$};
                \node[rmclass] at (1.0, -0.8) {\large $R_M$};
 
                \node[rmclass] at (2.6, -1.1) {\large $R_M$};

                \node[draw, fill=white, shift={(-0.2,0.2)}, anchor=south east] at (W.south east) {
                    \large
                    \begin{tabular}{ll}
                    \tikz[baseline=-0.5ex]\draw[dashed, blue, ultra thick] (0,0) -- (0.6,0); &  $R_L$ \\
                    \tikz[baseline=-0.5ex]\draw[dotted, black, ultra thick] (0,0) -- (0.6,0); &  $R_M$ \\
                    \end{tabular}
                };
            \end{tikzpicture}
        }
        \caption{The Stratified World Model of the $\mathbf{SCP}$ System}
        \label{fig:scpmodel0}
\end{figure}
Figure \ref{fig:scpmodel0} illustrates the hierarchical structure
of the $\mathbf{SCP}$ system's world model. Its semantic framework
consists of the following four core components:

\textbf{The Universe $W$ and Impossible Worlds $I$}: The outermost
rectangle represents the total set $W$ of all logically possible
and impossible worlds. Points within $I$ represent anti-logical worlds
where logical laws may fail, providing a semantic buffer for handling
counterpossibles. 

\textbf{Normal World Region $N$}: The central oval region $N$ represents
the set of normal worlds that satisfy all logical axioms of the $\mathbf{SCP}$
system. 

\textbf{Logical Equivalence Classes $R_{L}$}: Within $N$, blue dashed
lines partition the space into $R_{L}$-clusters. Disjoint equivalence
classes represent different boundaries of logical possibility. 

\textbf{Metaphysical Equivalence Classes $R_{M}$}: Within each $R_{L}$
equivalence class, $R_{M}$ classes are further partitioned. This
reflects the nested property $R_{M}\subseteq R_{L}$, indicating that
metaphysical possibility is a finer-grained division restricted by
logical possibility. 

\begin{definition}[Validity] Given a class $\mathsf{S}$ of $\mathbf{SCP}$ models, an inference from a set of formulas $\Gamma$ to $\varphi$
is valid in $\mathsf{S}$, denoted by $\Gamma\vDash_{\mathsf{S}}\varphi$,
if and only if for any $M=(W,N,R_{M},R_{L},f,V)\in\mathsf{S}$ and
any $w\in N$, if $M,w\Vdash\Gamma$, then $M,w\Vdash\varphi$. \end{definition}

Validity is strictly restricted to the set of normal
worlds $N$. This aligns with the standard approach in counterpossible
logic: we evaluate hypotheses about the ``impossible'' from the
standpoint of worlds where logical laws remain in effect. Let $\mathsf{SCP}$
denote the class of models satisfying all semantic constraints. In
this section, $\vDash_{\mathsf{SCP}}$ is often abbreviated as $\vDash$.

\subsection{Validities of \texorpdfstring{$\mathbf{SCP}$}{SCP}}

\label{subsec:scp-validities}

We examine the key logical properties of $\mathbf{SCP}$. In standard counterfactual semantics (such as Lewis-Stalnaker semantics),
conditionals with impossible antecedents are treated as ``vacuously
true.'' Non-vacuism rejects this, arguing that even if an antecedent
is unsatisfiable, the conditional should have non-trivial truth conditions.
In other words, a contradiction or an impossibility should not imply
everything. Consequently, a counterpossible logic should not validate
the following two formulas:\footnote{For a more detailed classification of counterpossible logics, see
\cite{frenchCLASSICALCOUNTERPOSSIBLES}.} 
\begin{center}
(EPQ)\quad{}$\bot>\varphi$ \qquad{}\qquad{}(VAC)\quad{}$\neg\Diamond\varphi\to(\varphi>\psi)$ 
\par\end{center}

\begin{proposition} $\mathbf{SCP}$ is a counterpossible logic. 
\[
\begin{array}{@{}lll@{}}
1. \ \nvDash\bot>\varphi\qquad & 
2. \ \nvDash\neg\Diamond_{L}\varphi\to(\varphi>\psi)\qquad & 
3. \ \Diamond_{L}\varphi\vDash\varphi>\top
\end{array}
\]
\end{proposition}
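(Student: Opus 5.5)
Items 1 and 2 are non-validities, so each needs one countermodel; item 3 is a short semantic argument from the constraints in Definition~\ref{def:scp-constraints}. Throughout, $\Diamond_L\varphi$ abbreviates $\neg\Box_L\neg\varphi$, so at a normal world $w$ it holds iff some $v\in R_L(w)$ satisfies $\varphi$. I will use one small model for 1 and 2, with atoms $p,q$: $W=\{w,u\}$, $N=\{w\}$, $I=\{u\}$, $R_M=R_L=\{(w,w)\}$.

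For 1, take $\varphi:=q$ and set $V(q)=\emptyset$. Put $u\in V(\bot)$ and $f(w,\bot)=\{u\}$; this is exactly the freedom the anti-logical worlds provide. Now check the constraints at $w$. (modal) holds trivially. (id) requires $u\in[\bot]$, which holds because truth at $u\in I$ is read off $V$. (soc-m) and (sic) apply only when $w\Vdash\Diamond_M\bot$ or $w\Vdash\Diamond_L\bot$. Since $w$ is normal and $w\nVdash\bot$, neither holds, so both constraints are vacuous. Finally, $u\notin V(q)$, so $f(w,\bot)\not\subseteq[q]$ and $w\nVdash\bot>q$.

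For 2, use the same model with $\varphi:=\bot$ and $\psi:=q$. Since $R_L(w)=\{w\}$ and $w\nVdash\bot$, we have $w\Vdash\neg\Diamond_L\bot$, while $w\nVdash\bot>q$ as in item 1. Hence the implication fails at $w$. Alternatively one can use a non-contradictory antecedent $p$ false throughout $R_L(w)$ and select an $I$-world in $V(p)\setminus V(q)$; the same checks go through because (sic) is again vacuous.

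For 3, take any model and any $w\in N$ with $w\Vdash\Diamond_L\varphi$. By (sic), $f(w,\varphi)\subseteq N$. At normal worlds the recursive clauses apply, so every $v\in N$ satisfies $\top$, understood as a classical tautology such as $p\to p$. Therefore $f(w,\varphi)\subseteq[\top]$, and $w\Vdash\varphi>\top$.

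The main obstacle is the bookkeeping that $V$ is defined on all formulas. I need to be sure that assigning $u\in V(\bot)$ is a legitimate valuation, and that the tautology in item 3 is true only because the worlds are normal: an $I$-world could falsify $\top$, which is exactly why (sic) is needed. I also need to verify that the countermodel meets every constraint, in particular that (soc-m) and (sic) really are vacuous at $w$.
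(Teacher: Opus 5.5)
Your proof is correct and is essentially the paper's argument: (id) forces $f(w,\bot)$ into $I$, where the valuation is unconstrained, which refutes items 1 and 2, and (sic) confines $f(w,\varphi)$ to $N$, where $\top$ holds, which gives item 3. You simply make the countermodel explicit; the one piece of bookkeeping to add is that $f(w,\chi)$ must be given for every formula $\chi$, since the constraints quantify over all formulas (setting $f(w,\chi)=\varnothing$ for $\chi\neq\bot$ makes them hold trivially).
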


Regarding the first point, in $\mathbf{SCP}$, $f(w,\bot)\subseteq I$.
Since truth values are assigned arbitrarily in anti-logical worlds,
$\bot>\varphi$ is not universally valid. The same reasoning applies
to the second point. The third point indicates that $\mathbf{SCP}$
does not weaken logical strength in all cases. As long as the antecedent
is logically possible ($\Diamond_{L}\varphi$), the conditional operator
reverts to classical semantic rules, thereby ensuring the necessity
of the tautology $\top$.

\begin{proposition} The following hold. \label{prop:necessity-LM} 
\[
\begin{array}{@{}lll@{}}
1. \ \nvDash\Box_{L}(\Box_{M}\varphi\to\varphi) & \qquad
2. \ \Box_{M}\psi\nvDash\Diamond_{L}\varphi\to(\varphi>\psi) & \qquad
3. \ \vDash\Box_{L}\varphi\to\Box_{M}\Box_{L}\varphi
\end{array}
\]
\end{proposition}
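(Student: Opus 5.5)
The plan is to refute items 1 and 2 with small finite countermodels and to prove item 3 directly from the constraint (modal) together with the equivalence properties of $R_L$. Throughout, only worlds in $N$ matter for the modal clauses, since $R_M,R_L\subseteq N\times N$.

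\textbf{Item 3.} Fix $w\in N$ with $M,w\Vdash\Box_L\varphi$, so $R_L(w)\subseteq[\varphi]$. Take any $v\in R_M(w)$. By (modal), $v\in R_L(w)$. By symmetry and transitivity of $R_L$, $R_L(v)=R_L(w)$; reflexivity is not needed here. Hence $R_L(v)\subseteq[\varphi]$, so $v\Vdash\Box_L\varphi$. Since $v$ was arbitrary, $w\Vdash\Box_M\Box_L\varphi$.

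\textbf{Item 2.} Let $W=N=\{w,u\}$, let $R_M$ be the identity on $N$, and let $R_L=N\times N$. Take atoms $p,q$ with $V(p)=\{u\}$ and $V(q)=\{w\}$. Set $\varphi=p$ and $\psi=q$, and let $f(w,p)=\{u\}$; $f$ may be chosen arbitrarily elsewhere, subject to (id).

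At $w$ we have $\Box_M q$, because $R_M(w)=\{w\}$. We also have $\Diamond_L p$, witnessed by $u$. But $p>q$ fails at $w$, since $u\notin[q]$.

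Now check the constraints at $w$. (modal) is immediate. (id) holds because $u\Vdash p$. (sic) holds because $u\in N$. (soc-m) is vacuous, because $\Diamond_M p$ fails at $w$. The same checks are routine at $u$, for example by setting $f(u,\chi)=[\chi]\cap R_M(u)$ when that set is nonempty and $[\chi]\cap N$ otherwise. So $\Box_M\psi$ does not entail $\Diamond_L\varphi\to(\varphi>\psi)$.

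\textbf{Item 1: the main obstacle.} If $R_M$ is reflexive on all of $N$, then every $v\in R_L(w)\subseteq N$ satisfies $\Box_M\varphi\to\varphi$, and item 1 would in fact be valid. So the countermodel must exploit that $R_M$ is an equivalence relation only on its own field, i.e. symmetric and transitive, and need not be reflexive at every normal world.

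I would therefore take $N=W=\{w,v\}$, $R_L=N\times N$, and $R_M=\{(w,w)\}$. This keeps $R_M(x)\subseteq R_L(x)$ for both worlds. Let $V(p)=\{w\}$. Then $R_M(v)=\emptyset$, so $v\Vdash\Box_M p$, while $v\nVdash p$. Since $v\in R_L(w)$, this gives $w\nVdash\Box_L(\Box_M p\to p)$.

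The selection function is fixed as in item 2, so that (id), (soc-m) and (sic) all hold. I expect the delicate point to be justifying this partial-equivalence reading of Definition~\ref{def:scp-model} for $R_M$. I would state it explicitly, and I would note that item 3 is unaffected by it, because the argument for item 3 uses only symmetry and transitivity.
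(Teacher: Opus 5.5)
Your argument for item 3 is the same as the paper's, which proves only item 3. Your countermodel for item 2 is correct and stays within the paper's model class. One small fix there: $f$ at other pairs must respect (soc-m) as well as (id). The simplest choice is $f(x,\chi)=\varnothing$ for every other pair, which $\mathbf{SCP}$ permits. Your recursive scheme also works, since $[\chi]$ depends only on $f$ at antecedents that are proper subformulas of $\chi$.

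The genuine problem is item 1. Your opening observation is correct, but it refutes the claim rather than flagging a subtlety. Definition~\ref{def:scp-model} makes $R_M$ an equivalence relation on $N$, so it is reflexive at every normal world. Hence $\Box_M\varphi\to\varphi$ is true at every $v\in N$, and $\Box_L(\Box_M\varphi\to\varphi)$ is valid. The paper's own proof system agrees: the formula follows from (TM) by (RN$_L$), so Theorem~\ref{theorem:scp-soundness} makes it valid.

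The partial-equivalence reading cannot be justified inside the paper. In your model $v\Vdash\Box_M p$ and $v\nVdash p$, so axiom (TM) fails at a normal world. That contradicts the soundness theorem, and also the S5 reasoning used in the canonical and filtration constructions. What you exhibit is a countermodel in a different model class, whose logic is not $\mathbf{SCP}$.

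The correct conclusion is that item 1 is false as written and cannot be proved; you should report that rather than change the semantics. The paper's gloss speaks of a metaphysical necessity failing at a logically accessible world, so the intended claim is presumably $\nvDash\Box_M\varphi\to\Box_L\varphi$. Your item 2 model already witnesses this: at $w$, $R_M(w)=\{w\}\subseteq[q]$, while $u\in R_L(w)$ and $u\nVdash q$.
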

\begin{proof}
We only prove the Item 3. Suppose $M,w\Vdash\Box_{L}\varphi$. Since $R_{L}$ is an equivalence
relation, for any $v\in R_{L}(w)$, we have $M,v\Vdash\Box_{L}\varphi$.
Given $R_{M}(w)\subseteq R_{L}(w)$, it follows that $M,w\Vdash\Box_{M}\Box_{L}\varphi$.
\end{proof}

Item 1 shows that it is logically possible for a metaphysically
necessary proposition to be false. There exists a logic-normal world
$w\in N$ (logically possible relative to the actual world but metaphysically
impossible) that is logically accessible, yet metaphysical laws fail
in some worlds accessible from it. This ensures that we can discuss
scenarios like ``water is not $H_{2}O$'' without causing logical
collapse.

Item 2 further indicates that metaphysically necessary truths
are not necessarily preserved under anti-essential hypotheses. For
instance, let $\psi$ be ``water is $H_{2}O$'' (metaphysically
necessary) and $\varphi$ be ``water is $XYZ$'' (anti-essential
but logically possible). This is a distinct advantage of the $\mathbf{SCP}$
system: it allows for a meaningful discussion of the consequences
of ``water not being $H_{2}O$'' without the derivation becoming
vacuous due to the necessity of $\psi$.

Item 3 reflects the ``overlap'' of the logical hierarchy
over the metaphysical one. While metaphysical laws can be false at
the logical level (see the first point), logical laws must be true
at the metaphysical level. Once a proposition $\varphi$ is judged
logically necessary (i.e., true in all worlds in $R_{L}$), it remains
logically necessary regardless of how we shift our metaphysical perspective
via $R_{M}$.

\subsection{Axiomatic system}

The axiomatic system for $\mathbf{SCP}$ is built
upon classical propositional logic, supplemented by $\mathbf{S5}$ modal operators
$\Box_{L}$ and $\Box_{M}$. The system is defined as follows: \label{def:SCP-axioms}

\medskip
\noindent\textbf{Basic Propositional and Modal Axioms} 
\[
\begin{array}{@{}ll@{\qquad}ll@{}}
\text{(PC)} & \text{All tautologies of propositional logic}\\
\text{(KL)} & \Box_{L}(\varphi\to\psi)\to(\Box_{L}\varphi\to\Box_{L}\psi) & \text{(KM)} & \Box_{M}(\varphi\to\psi)\to(\Box_{M}\varphi\to\Box_{M}\psi)\\
\text{(TL)} & \Box_{L}\varphi\to\varphi & \text{(TM)} & \Box_{M}\varphi\to\varphi\\
\text{(4L)} & \Box_{L}\varphi\to\Box_{L}\Box_{L}\varphi & \text{(4M)} & \Box_{M}\varphi\to\Box_{M}\Box_{M}\varphi\\
\text{(5L)} & \Diamond_{L}\varphi\to\Box_{L}\Diamond_{L}\varphi & \text{(5M)} & \Diamond_{M}\varphi\to\Box_{M}\Diamond_{M}\varphi
\end{array}
\]

\noindent\textbf{Modal and Conditional Axioms} 
\begin{align*}
 & \text{(MODAL)} &  & \Box_{L}\varphi\to\Box_{M}\varphi\\
 & \text{(ID)} &  & \varphi>\varphi\\
 & \text{(SOC-M)} &  & \Diamond_{M}\varphi\land\Box_{M}\psi\to(\varphi>\psi)
\end{align*}

\noindent\textbf{Inference Rules} 
\begin{gather*}
\frac{\varphi,\quad\varphi\to\psi}{\psi}\text{(MP)}\qquad\quad\frac{\varphi}{\Box_{L}\varphi}\text{(RN}_{L})\qquad\quad\frac{\varphi}{\Box_{M}\varphi}\text{(RN}_{M})\\[10pt]
\frac{(\varphi_{1}\wedge\ldots\wedge\varphi_{n})\to\varphi}{\Diamond_{L}\psi\to(((\psi>\varphi_{1})\wedge\ldots\wedge(\psi>\varphi_{n}))\to(\psi>\varphi))}\text{(SIC)}
\end{gather*}

\subsection{Soundness and completeness}

\begin{theorem}[Soundness of $\mathbf{SCP}$] For any $\Gamma\cup\{\varphi\}\subseteq\mathscr{L}$,
if $\Gamma\vdash_{SCP}\varphi$, then $\Gamma\vDash_{SCP}\varphi$.
\label{theorem:scp-soundness} 
\end{theorem}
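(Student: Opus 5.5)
Soundness will follow by induction on the length of derivations. Since $\Gamma\vdash_{SCP}\varphi$ means that $\varphi$ follows by (MP) from $\Gamma$ together with theorems of $\mathbf{SCP}$, and since $\vDash$ only quantifies over normal worlds $w\in N$ (where the connectives $\neg,\to$ are classical), it suffices to show two things. First, every axiom is true at every normal world of every $\mathbf{SCP}$ model. Second, the rules (RN$_L$), (RN$_M$) and (SIC) preserve truth at all normal worlds of a model, and (MP) preserves truth at a given normal world. Then, for premises in $\Gamma$, truth at $w$ is propagated by (MP) alone, and theorems are true at $w$.

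For the axioms I will proceed as follows. (PC) holds because evaluation at $w\in N$ is classical. (KL), (KM) and the $\mathbf{S5}$ axioms (T, 4, 5) for both boxes hold because $R_L,R_M$ are equivalence relations on $N$ and the boxes quantify over $R_L(w),R_M(w)\subseteq N$. Here I must note that accessible worlds are always normal, so the classical clauses apply there too. (MODAL) follows from the constraint (modal), $R_M(w)\subseteq R_L(w)$. (ID) is exactly (id). For (SOC-M): if $w\Vdash\Diamond_M\varphi$, then (soc-m) gives $f(w,\varphi)\subseteq R_M(w)$, and $w\Vdash\Box_M\psi$ gives $R_M(w)\subseteq[\psi]$, so $w\Vdash\varphi>\psi$.

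For the rules, (MP) is immediate from the classical clause for $\to$ at $w$. For (RN$_L$) and (RN$_M$): if $\varphi$ is true at all normal worlds, then, since $R_L(w),R_M(w)\subseteq N$, $\Box_L\varphi$ and $\Box_M\varphi$ are true at all normal worlds. This step is where validity restricted to $N$ matters, and it works only because the relations live inside $N$.

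The main obstacle is (SIC), since the premise $(\varphi_1\wedge\dots\wedge\varphi_n)\to\varphi$ is only known to hold at normal worlds, while $f(w,\psi)$ may in general contain anti-logical worlds where nothing is guaranteed. I will handle it using the antecedent $\Diamond_L\psi$. Assume $w\in N$ with $w\Vdash\Diamond_L\psi$ and $w\Vdash\psi>\varphi_i$ for all $i$. By (sic), $f(w,\psi)\subseteq N$. Take any $v\in f(w,\psi)$. Then $v\in N$ and $v\Vdash\varphi_i$ for each $i$, so $v\Vdash\varphi_1\wedge\dots\wedge\varphi_n$ by the classical clauses at $v$. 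By the inductive hypothesis the premise is valid, hence true at $v$, so $v\Vdash\varphi$. Thus $f(w,\psi)\subseteq[\varphi]$ and $w\Vdash\psi>\varphi$. Combining these cases completes the induction and yields the theorem.
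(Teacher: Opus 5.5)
Your proposal is correct and follows the same route as the paper: the only substantive cases are (SIC), where you use $\Diamond_L\psi$ together with (sic) to confine $f(w,\psi)$ to $N$ so that the valid premise applies at each selected world, and (SOC-M), handled via (soc-m) exactly as in the paper. You are somewhat more explicit than the paper on two points: that (RN$_L$), (RN$_M$) and (SIC) are applied only to theorems in derivations from $\Gamma$, and that $R_L(w),R_M(w)\subseteq N$ is what allows the classical clauses to be used at accessible worlds.
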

\begin{proof}
The validity of MODAL, ID, PC, and S5-related axioms and rules is
straightforward.

(SIC): Suppose $\vDash(\varphi_{1}\wedge...\wedge\varphi_{n})\to\varphi$.
Let $M,w\Vdash\Diamond_{L}\psi$ and $M,w\Vdash\psi>\varphi_{1},\dots,M,w\Vdash\psi>\varphi_{n}$.
According to the truth conditions and the (sic) constraint, $f(w,\psi)\subseteq[\varphi_{1}],\dots,f(w,\psi)\subseteq[\varphi_{n}]$
and $f(w,\psi)\subseteq N$. Thus, $f(w,\psi)\subseteq[\varphi_{1}\wedge...\wedge\varphi_{n}]$.
Since $\vDash(\varphi_{1}\wedge...\wedge\varphi_{n})\to\varphi$,
for any world $x\in N$, $x\in[\varphi_{1}\wedge...\wedge\varphi_{n}]\implies x\in[\varphi]$,
meaning $[\varphi_{1}\wedge...\wedge\varphi_{n}]\subseteq[\varphi]$.
Therefore, $M,w\Vdash\psi>\varphi$.

(SOC-M): Suppose $w\in N$ and $M,w\Vdash\Diamond_{M}\varphi\land\Box_{M}\psi$.
Since $M,w\Vdash\Box_{M}\psi$, it follows that $R_{M}(w)\subseteq[\psi]$.
Given $M,w\Vdash\Diamond_{M}\varphi$, the constraint (soc-m) implies
$f(w,\varphi)\subseteq R_{M}(w)$. Therefore, $M,w\Vdash\varphi>\psi$.
\end{proof}

To prove the completeness of the $\mathbf{SCP}$ system, we use the
canonical model construction. The core objective is to show that every
consistent set of formulas is satisfiable in some model. The key challenge in the construction is handling the opposition between
logic-normal worlds and anti-logical worlds. We must construct maximal
consistent sets (MCS) to represent normal worlds, and all subsets
of the language to serve as anti-logical worlds that accommodate logical
impossibilities.

\begin{definition}[Canonical Models for $\mathbf{SCP}$]
The canonical model $M^{c}=(W^{c},N^{c},f^{c},R_{M}^{c},R_{L}^{c},V^{c})$
is defined as follows: \label{def:canonical-model} 
\begin{itemize}
\item $W^{c}=\mathcal{P}(\mathscr{L})$ 
\item $N^{c}=\{w\mid w\text{ is an MCS of }\mathbf{SCP}\}$ with $I^{c}:=W^{c}-N^{c}$.
\item $\forall x,y\in N^{c},xR_{M}^{c}y$ iff $\{\varphi\mid\Box_{M}\varphi\in x\}\subseteq y$. 
\item $\forall x,y\in N^{c},xR_{L}^{c}y$ iff $\{\varphi\mid\Box_{L}\varphi\in x\}\subseteq y$. 
\item $f^{c}(w,\varphi)=\begin{cases}
\left\{ x\in N^{c}\mid\left\{ \psi\mid\varphi>\psi\in w\right\} \subseteq x\right\}  & \text{if }\Diamond_{L}\varphi\in w\\
\left\{ x\in W^{c}\mid\left\{ \psi\mid\varphi>\psi\in w\right\} \subseteq x\right\}  & \text{if }\Diamond_{L}\varphi\notin w
\end{cases}$ 
\item $V^{c}(\varphi)=\{\Delta\mid\varphi\in\Delta,\Delta\in\mathcal{P}(\mathscr{L})\}$. 
\end{itemize}
Let $[[\varphi]]:=\{w\in W^{c}\mid M^{c},w\Vdash\varphi\}$. \end{definition}

\begin{lemma}[Truth Lemma] For any $w\in W^{c}$, $M^{c},w\Vdash\alpha$
iff $\alpha\in w$. \label{lem:truth-lemma} \end{lemma}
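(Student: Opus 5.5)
The plan is to split the worlds of the canonical model into anti-logical and normal ones, and argue separately for each.

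\emph{Anti-logical worlds.} If $w\in I^{c}$, then by the truth conditions $M^{c},w\Vdash\alpha$ iff $w\in V^{c}(\alpha)$. By the definition of $V^{c}$ this holds iff $\alpha\in w$, so nothing more is needed here.

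\emph{Normal worlds.} If $w\in N^{c}$, we argue by induction on the complexity of $\alpha$, simultaneously for all $w\in N^{c}$.
\begin{itemize}
\item The atomic case is immediate from $V^{c}$.
\item The cases for $\neg$ and $\to$ use the standard properties of MCSs.
\item For $\Box_{M}$ and $\Box_{L}$ we use the standard existence lemma. Given $\Box_{M}\varphi\notin w$, the set $\{\chi\mid\Box_{M}\chi\in w\}\cup\{\neg\varphi\}$ is consistent by (KM) and (RN$_M$). Extending it by Lindenbaum gives a $v\in N^{c}$ with $wR_{M}^{c}v$ and $\varphi\notin v$. Since $R_{M}^{c}(w)\subseteq N^{c}$, the induction hypothesis applies to every world in it. The same argument works for $\Box_{L}$.
\end{itemize}
We should also check, though the lemma itself does not need it, that $R_{M}^{c}$ and $R_{L}^{c}$ are equivalence relations by T, 4 and 5. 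Likewise (MODAL) gives $R_{M}^{c}\subseteq R_{L}^{c}$.

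\emph{The conditional case.} The hardest step is $\alpha=\varphi>\psi$, because $f^{c}(w,\varphi)$ may contain anti-logical worlds. There the truth of $\psi$ is governed by $V^{c}$, not by the induction hypothesis. The first part of this lemma handles that: for every $x\in W^{c}$, whether in $N^{c}$ or $I^{c}$, $M^{c},x\Vdash\psi$ iff $\psi\in x$. For $x\in N^{c}$ this uses the induction hypothesis on $\psi$, and for $x\in I^{c}$ it uses the anti-logical case. So $[\psi]=\{x\in W^{c}\mid\psi\in x\}$, and we must show that $f^{c}(w,\varphi)\subseteq\{x\mid\psi\in x\}$ iff $\varphi>\psi\in w$.

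\emph{Left to right of the conditional case.} If $\varphi>\psi\in w$, every $x\in f^{c}(w,\varphi)$ contains $S:=\{\chi\mid\varphi>\chi\in w\}$, hence contains $\psi$. For the converse, suppose $\varphi>\psi\notin w$.
\begin{itemize}
\item If $\Diamond_{L}\varphi\notin w$, take $x:=S$ itself. It lies in $W^{c}=\mathcal{P}(\mathscr{L})$, so $x\in f^{c}(w,\varphi)$, and $\psi\notin x$ by assumption.
\item If $\Diamond_{L}\varphi\in w$, we need an MCS extending $S\cup\{\neg\psi\}$, so we must show this set is consistent. Otherwise some $\chi_{1},\dots,\chi_{n}\in S$ satisfy $\vdash(\chi_{1}\wedge\dots\wedge\chi_{n})\to\psi$. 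Rule (SIC) together with $\Diamond_{L}\varphi\in w$ then yields $\varphi>\psi\in w$, a contradiction. (If $n=0$, use $\vdash\top\to\psi$, i.e.\ $\vdash\psi\to\psi$ with $\chi_1=\varphi$ via (ID), so that $S\neq\emptyset$.)
\end{itemize}
Lindenbaum then gives the required $x\in N^{c}$.

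A subtle point to double-check is the case $\Diamond_{L}\varphi\notin w$: $f^{c}$ is defined there over all of $W^{c}$, so an MCS $x$ might be selected. Its truth is evaluated recursively, but the induction hypothesis on $\psi$ still covers it, so the argument goes through.
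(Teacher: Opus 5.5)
Your proposal is correct and follows the paper's proof essentially step for step. The anti-logical case goes via $V^{c}$, the normal worlds are handled by induction on $N^{c}$ with the standard existence argument for $\Box_{M}$ and $\Box_{L}$, and for $\varphi>\psi$ the witness is $\{\chi\mid\varphi>\chi\in w\}$ itself when $\Diamond_{L}\varphi\notin w$, and otherwise a Lindenbaum extension whose consistency comes from (SIC). Your handling of the $n=0$ instance of (SIC), padding with $\varphi\in S$ via (ID), fills a small step the paper leaves implicit; note that you mean $\vdash\varphi\to\psi$ there, not $\vdash\psi\to\psi$.
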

\begin{proof}
For any $w\in I^{c}$ and $\alpha\in\mathscr{L}$, the truth condition
gives $M^{c},w\Vdash\alpha\iff w\in V^{c}(\alpha)\iff\alpha\in w$.

For any $w\in N^{c}$, we proceed by induction on the complexity of
formula $\alpha$. The cases for atomic propositions and Boolean connectives
are trivial.

Case $\alpha=\Box_{M}\varphi$:

$(\Leftarrow)$ If $\Box_{M}\varphi\in w$, then for all $x\in R_{M}^{c}(w)$,
$\varphi\in x$. By IH, $M^{c},x\Vdash\varphi$
for all $x\in R_{M}^{c}(w)$, hence $M^{c},w\Vdash\Box_{M}\varphi$.

$(\Rightarrow)$ If $\Box_{M}\varphi\notin w$, we construct a set
$\Gamma=\{\psi\mid\Box_{M}\psi\in w\}\cup\{\neg\varphi\}$. If $\Gamma$
were inconsistent, then $\vdash(\psi_{1}\wedge\dots\wedge\psi_{n})\to\varphi$
for some $\Box_{M}\psi_{i}\in w$. By normality (K and RN), $\vdash(\Box_{M}\psi_{1}\wedge\dots\wedge\Box_{M}\psi_{n})\to\Box_{M}\varphi$,
implying $\Box_{M}\varphi\in w$, a contradiction.

Case $\alpha=\Box_{L}\varphi$: Similar to the $\Box_{M}\varphi$
case.

Case $\alpha=\varphi>\psi$:

$(\Leftarrow)$ If $\varphi>\psi\in w$, then by the definition of
$f^{c}$, for any $x\in f^{c}(w,\varphi)$, we have $\psi\in x$.
By IH, $M^{c},x\Vdash\psi$, so $M^{c},w\Vdash\varphi>\psi$.

$(\Rightarrow)$ Suppose $\varphi>\psi\notin w$.

If $\Diamond_{L}\varphi\notin w$: Let $u=\{\chi\mid(\varphi>\chi)\in w\}$.
Since $\varphi>\psi\notin w$, $\psi\notin u$. If $u\in N^{c}$,
by IH, $M^{c},u\nVdash\psi$. Since $u\in f^{c}(w,\varphi)$,
$M^{c},w\nVdash\varphi>\psi$. If $u\notin N^{c}$, by the truth condition,
we also have $M^{c},w\nVdash\varphi>\psi$.

If $\Diamond_{L}\varphi\in w$: We show that $\{\chi\mid(\varphi>\chi)\in w\}\cup\{\neg\psi\}$
is consistent. If not, there exist $\chi_{i}$ such that $\vdash(\chi_{1}\land\dots\land\chi_{n})\to\psi$.
By the SIC rule, $\vdash\Diamond_{L}\varphi\to(((\varphi>\chi_{1})\wedge\ldots\wedge(\varphi>\chi_{n}))\to(\varphi>\psi))$.
This implies $\varphi>\psi\in w$, a contradiction. 
\end{proof}

We further prove that $M^{c}$ satisfies the semantic constraints
of the defined class of models.

\begin{lemma}\label{lem:c-constraints-scp} The canonical model $M^{c}$
satisfies the semantic constraints: for any $w\in N^{c}$, 
\[
\begin{array}{@{}ll@{\qquad}ll@{}}
\text{(modal)} & R_{M}^{c}(w)\subseteq R_{L}^{c}(w) & \text{(id)} & f^{c}(w,\varphi)\subseteq[[\varphi]]\\
\text{(soc-m)} & \text{If }w\in[[\Diamond_{M}\varphi]],\ \text{then }f^{c}(w,\varphi)\subseteq R_{M}^{c}(w) & \text{(sic)} & \text{If }w\in[[\Diamond_{L}\varphi]],\ \text{then }f^{c}(w,\varphi)\subseteq N^{c}
\end{array}
\]
\end{lemma}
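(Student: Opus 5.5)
We verify the four constraints one at a time. Throughout we use the Truth Lemma (Lemma~\ref{lem:truth-lemma}), so that $[[\chi]]=\{x\in W^{c}\mid\chi\in x\}$, and we use standard properties of MCSs.

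\textbf{(modal).} Let $wR_{M}^{c}y$ and suppose $\Box_{L}\chi\in w$. By (MODAL), $\Box_{M}\chi\in w$, hence $\chi\in y$. So $wR_{L}^{c}y$.

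\textbf{(id).} Let $x\in f^{c}(w,\varphi)$. By (ID), $\varphi>\varphi\in w$, so $\varphi\in\{\psi\mid\varphi>\psi\in w\}\subseteq x$. Both clauses of $f^{c}$ require this inclusion, so $\varphi\in x$ in either case. By the Truth Lemma (which holds for all $x\in W^{c}$, including anti-logical ones), $x\in[[\varphi]]$.

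\textbf{(sic).} If $w\in[[\Diamond_{L}\varphi]]$, then $\Diamond_{L}\varphi\in w$ by the Truth Lemma. The first clause of $f^{c}$ then applies, and it yields only elements of $N^{c}$ by definition.

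\textbf{(soc-m).} This is the only step needing an axiom interacting with $>$, and it is the main point. Suppose $\Diamond_{M}\varphi\in w$, and let $x\in f^{c}(w,\varphi)$. We must show $wR_{M}^{c}x$. First, $x\in N^{c}$: from $\Diamond_{M}\varphi\in w$ we get $\Diamond_{L}\varphi\in w$, since (MODAL) contraposed gives $\Diamond_{M}\varphi\to\Diamond_{L}\varphi$. So the first clause of $f^{c}$ applies and $x$ is an MCS. Now take $\Box_{M}\chi\in w$. Then $\Diamond_{M}\varphi\wedge\Box_{M}\chi\in w$, and by (SOC-M), $\varphi>\chi\in w$, so $\chi\in x$. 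Hence $\{\chi\mid\Box_{M}\chi\in w\}\subseteq x$, i.e.\ $wR_{M}^{c}x$.

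As a side remark, the canonical relations $R_{M}^{c}$ and $R_{L}^{c}$ are equivalence relations on $N^{c}$ by the usual S5 argument (T, 4, 5). This is needed to place $M^{c}$ in the model class, but it is not among the constraints listed here.
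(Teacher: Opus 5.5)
Your proof is correct and follows the paper's argument constraint by constraint: (modal) from MODAL, (id) from ID plus the Truth Lemma, (sic) from the first clause of $f^{c}$, and (soc-m) from the SOC-M axiom, where you argue directly and the paper argues by contradiction. Your explicit step that $\Diamond_{M}\varphi\in w$ gives $\Diamond_{L}\varphi\in w$, so that every $x\in f^{c}(w,\varphi)$ lies in $N^{c}$, is genuinely needed because $R_{M}^{c}$ is only defined on $N^{c}$. The paper leaves this step implicit here and spells it out only in the corresponding $\mathbf{SCP1}$ lemma.
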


\begin{proof}
(modal): Derived directly from the axiom $\Box_{L}\varphi\to\Box_{M}\varphi$.

(id): Derived directly from the axiom $\varphi>\varphi$.

(soc-m): Suppose $w\in[[\Diamond_{M}\varphi]]$. By the Truth Lemma,
$\Diamond_{M}\varphi\in w$. Assume there exists $u\in f^{c}(w,\varphi)$
such that $u\notin R_{M}^{c}(w)$. Then there exists $\psi$ such
that $\Box_{M}\psi\in w$ but $\psi\notin u$. By SOC-M, $\varphi>\psi\in w$.
Thus $\psi\in u$ by the definition of $f^{c}$, a contradiction.

(sic): Suppose $w\in[[\Diamond_{L}\varphi]]$. By the Truth Lemma,
$\Diamond_{L}\varphi\in w$. From the definition of $f^{c}$ in the
canonical model, $f^{c}(w,\varphi)\subseteq N^{c}$. 
\end{proof}

\begin{theorem}[Completeness] If $\Gamma\vDash_{SCP}\varphi$, then
$\Gamma\vdash_{SCP}\varphi$. \end{theorem}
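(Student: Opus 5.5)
We argue by contraposition, following the standard canonical-model route. Suppose $\Gamma\nvdash_{SCP}\varphi$. Then $\Gamma\cup\{\neg\varphi\}$ is $\mathbf{SCP}$-consistent: otherwise there are $\gamma_{1},\dots,\gamma_{n}\in\Gamma$ with $\vdash(\gamma_{1}\wedge\dots\wedge\gamma_{n})\to\varphi$ by (PC) and (MP), and hence $\Gamma\vdash\varphi$. By Lindenbaum's lemma, which holds because the system contains (PC) and is closed under (MP), we extend $\Gamma\cup\{\neg\varphi\}$ to a maximal consistent set $w$. By the definition of $N^{c}$ we then have $w\in N^{c}$.

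Next we evaluate at $w$ in the canonical model $M^{c}$. By the Truth Lemma (Lemma~\ref{lem:truth-lemma}), $M^{c},w\Vdash\gamma$ for every $\gamma\in\Gamma$. Since $\neg\varphi\in w$ and $w$ is consistent, $\varphi\notin w$, so $M^{c},w\nVdash\varphi$. It remains to check that $M^{c}$ is a genuine $\mathbf{SCP}$ model lying in the class $\mathsf{SCP}$. In that case $w\in N^{c}$ is a normal world of a model in the class that satisfies $\Gamma$ but not $\varphi$, and so $\Gamma\nvDash_{SCP}\varphi$.

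The structural conditions are where we have to be careful, and they are the main obstacle. First, $W^{c}=\mathcal{P}(\mathscr{L})$ is nonempty and $N^{c}\subseteq W^{c}$. Second, $R_{M}^{c}$ and $R_{L}^{c}$ must be equivalence relations on $N^{c}$. This follows in the usual way from the axioms. Reflexivity comes from (TM)/(TL). Transitivity comes from (4M)/(4L). Symmetry comes from (5M)/(5L), together with (T) to obtain the B-principle $\varphi\to\Box\Diamond\varphi$. Third, $f^{c}$ and $V^{c}$ have the required types. The semantic constraints (modal), (id), (soc-m) and (sic) are exactly Lemma~\ref{lem:c-constraints-scp}, so we invoke that lemma directly. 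For (modal) and (id) it is worth confirming the short arguments. If $\Box_{L}\psi\in w$, then (MODAL) gives $\Box_{M}\psi\in w$, so $R_{M}^{c}(w)\subseteq R_{L}^{c}(w)$. For (id), since $\varphi>\varphi\in w$, every $x\in f^{c}(w,\varphi)$ contains $\varphi$, and the Truth Lemma, which holds for all $x\in W^{c}$ including anti-logical ones, yields $x\in[[\varphi]]$.

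One further subtlety is that validity quantifies only over normal worlds, so the countermodel world must lie in $N^{c}$. This is guaranteed because $w$ is an MCS. With these checks in place, the contrapositive is established and completeness follows.
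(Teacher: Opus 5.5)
Your proposal is correct and follows the paper's own route: contraposition plus Lindenbaum to get an MCS $w\in N^{c}$, the Truth Lemma (Lemma~\ref{lem:truth-lemma}) for the canonical model $M^{c}$, and Lemma~\ref{lem:c-constraints-scp} to place $M^{c}$ in $\mathsf{SCP}$, which is exactly the final step the paper leaves implicit after those lemmas. One detail is worth spelling out in (soc-m): from $\Diamond_{M}\psi\in w$ you get $\Diamond_{L}\psi\in w$ by (MODAL), so $f^{c}(w,\psi)\subseteq N^{c}$, and only then does the membership test defining $R_{M}^{c}$ apply to the selected worlds (the paper's own proof of that lemma also skips this step).
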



\subsection{Decidability}

To construct a finite model, we first define a finite set of formulas
$\Sigma_{\varphi}$ that contains the target formula $\varphi$ and
all its relevant inferential components.

\begin{definition}[Subformula Closure] For a formula $\varphi$,
the closure $\Sigma_{\varphi}$ is defined as the smallest set of
formulas satisfying the following conditions: \label{def:subformula-closure} 
\begin{itemize}
\item $\varphi\in\Sigma_{\varphi}$; 
\item If $\psi\in\Sigma_{\varphi}$ and $\chi$ is a subformula of $\psi$,
then $\chi\in\Sigma_{\varphi}$; 
\item If $\psi\in\Sigma_{\varphi}$ and $\psi$ does not start with $\neg$,
then $\neg\psi\in\Sigma_{\varphi}$; 
\item If $\Box_{M}\psi\in\Sigma_{\varphi}$ or $\Diamond_{M}\psi\in\Sigma_{\varphi}$,
then $\psi\in\Sigma_{\varphi}$; 
\item If $\Box_{L}\psi\in\Sigma_{\varphi}$ or $\Diamond_{L}\psi\in\Sigma_{\varphi}$,
then $\psi,\Box_{M}\Box_{L}\psi\in\Sigma_{\varphi}$; 
\item If $\psi>\chi\in\Sigma_{\varphi}$, then $\psi,\chi,\Diamond_{L}\psi,\Diamond_{M}\psi,\Box_{L}\chi,\Box_{M}\chi,\psi>\psi\in\Sigma_{\varphi}$. 
\end{itemize}
\end{definition}

The inclusion of $\Box_{M}\Box_{L}\psi$ captures the nested interaction
between $R_{M}^{*}$ and $R_{L}^{*}$. While not a direct subformula
of $\Box_{L}\psi$, this single-layer expansion avoids infinite recursion.
Similarly, adding modal derivatives such as $\Diamond_{L}\psi$ and
$\Box_{M}\chi$ provides the necessary information for the selection
function to satisfy the SOC-M and SIC constraints. Crucially, since
each subformula in $\Sigma_{\varphi}$ generates only a fixed, finite
number of such derivatives, the closure set remains finite.

\begin{definition}[Equivalence Relation] Define an equivalence
relation $\sim$ between worlds in the canonical model $M^{c}$: \label{def:equivalence-relation}
\[
w\sim v\iff\forall\psi\in\Sigma_{\varphi},(\psi\in w\iff\psi\in v)
\]
\end{definition}

Let $\overline{w}=\{v\in W^{c}\mid v\sim w\}$ denote the equivalence
class of $w$. Since $\Sigma_{\varphi}$ is finite, the quotient set
$W^{*}=W^{c}/\sim$ is also finite.

\begin{definition}[Filtration Model] Given the canonical model
$M^{c}=(W^{c},N^{c},R_{M}^{c},R_{L}^{c},f^{c},V^{c})$, the filtration
model for $\Sigma_{\varphi}$ is defined as $M^{*}=(W^{*},N^{*},R_{M}^{*},R_{L}^{*},f^{*},V^{*})$,
where: 
\begin{itemize}
\item $W^{*}=\{\overline{w}\mid w\in W^{c}\}$; 
\item $N^{*}=\{\overline{w}\mid w\in N^{c}\}$; 
\item For any $\overline{w},\overline{v}\in N^{*}$, $\overline{w}R_{M}^{*}\overline{v}\iff\forall\Box_{M}\psi\in\Sigma_{\varphi},(\Box_{M}\psi\in w\iff\Box_{M}\psi\in v)$; 
\item For any $\overline{w},\overline{v}\in N^{*}$, $\overline{w}R_{L}^{*}\overline{v}\iff\forall\Box_{L}\psi\in\Sigma_{\varphi},(\Box_{L}\psi\in w\iff\Box_{L}\psi\in v)$; 
\item For any $\psi>\chi\in\Sigma_{\varphi}$, let $U_{\psi,w}=\{\overline{v}\mid\forall(\psi>\chi)\in\Sigma_{\varphi},(\psi>\chi\in w\Rightarrow\chi\in v)\}$.
Let 
\[
f^{*}(\overline{w},\psi)=\begin{cases}
U_{\psi,w}\cap R_{M}^{*}(\overline{w}) & \text{if }\Diamond_{M}\psi\in w\\
U_{\psi,w}\cap N^{*} & \text{if }\Diamond_{M}\psi\notin w\text{ and }\Diamond_{L}\psi\in w\\
U_{\psi,w} & \text{if }\Diamond_{L}\psi\notin w
\end{cases}
\]
\item For other $\psi$, let $f^{*}(\overline{w},\psi)=\varnothing$.
\item $V^{*}(\psi)=\{\overline{w}\mid w\in V^{c}(\psi)\}$. 
\end{itemize}
\end{definition}

Let $[[\varphi]]^{*}:=\{w\in W^{*}\mid M^{*},w\Vdash\varphi\}$.

\begin{lemma} The model is well-defined; that is, for any $\overline{w},\overline{v}\in N^{*}$
and any $\psi>\chi\in\Sigma_{\varphi}$: \label{lem:welldefinescp} 
\[
\begin{array}{@{}ll@{}}
1. \text{If } w\sim v,\text{ then } R_{M}^{*}(\overline{w})=R_{M}^{*}(\overline{v}); & 
2. R_{M}^{*} \text{ is an equivalence relation on } N^{*}; \\[4pt]
3. \text{If } w\sim v,\text{ then } R_{L}^{*}(\overline{w})=R_{L}^{*}(\overline{v}); & 
4. R_{L}^{*} \text{ is an equivalence relation on } N^{*}; \\[4pt]
5. \text{If } w\sim v,\text{ then } f^{*}(\overline{w},\psi)=f^{*}(\overline{v},\psi). & 
\end{array}
\]
\end{lemma}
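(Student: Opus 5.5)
The plan is to check each of the five items directly from the definitions of $R_{M}^{*}$, $R_{L}^{*}$ and $f^{*}$. Everything rests on one observation: each of these notions is fixed by the set of $\Sigma_{\varphi}$-formulas that a representative contains, and $w\sim v$ means exactly that $w$ and $v$ contain the same $\Sigma_{\varphi}$-formulas. We also need $R_{M}^{*}$, $R_{L}^{*}$ and $f^{*}$ to be independent of the chosen representative of the argument class, not only of the class $\overline{w}$. So the first step is to note that if $w\sim w'$, then for every $\Box_{M}\psi\in\Sigma_{\varphi}$ we have $\Box_{M}\psi\in w\iff\Box_{M}\psi\in w'$, and likewise for $\Box_{L}$, for $\Diamond_{M}\psi$, $\Diamond_{L}\psi$ and for conditionals $\psi>\chi$ in $\Sigma_{\varphi}$. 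Here the closure clauses of Definition~\ref{def:subformula-closure} guarantee that the diamonds $\Diamond_{M}\psi,\Diamond_{L}\psi$ used in the case split of $f^{*}$ belong to $\Sigma_{\varphi}$ whenever $\psi>\chi\in\Sigma_{\varphi}$.

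For items 1 and 3, take $w\sim v$ and any $\overline{u}\in N^{*}$. Then $\overline{u}\in R_{M}^{*}(\overline{w})$ holds iff for all $\Box_{M}\psi\in\Sigma_{\varphi}$, $\Box_{M}\psi\in w\iff\Box_{M}\psi\in u$. Since $\Box_{M}\psi\in\Sigma_{\varphi}$ and $w\sim v$, we may replace $w$ by $v$, which gives $\overline{u}\in R_{M}^{*}(\overline{v})$. The same argument also shows that the membership condition does not depend on which representative $u$ is picked, so the relation is well defined. The argument for $R_{L}^{*}$ is identical.

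For items 2 and 4, note that $R_{M}^{*}$ is defined by agreement on the finite set $\{\Box_{M}\psi\in\Sigma_{\varphi}\}$. A relation of the form ``$x$ and $y$ agree on a fixed set of formulas'' is an equivalence relation: reflexivity, symmetry and transitivity follow from those of $\iff$. So these items are immediate, and $R_{L}^{*}$ is handled the same way.

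Item 5 needs the most care, though I do not expect it to be hard. First, $U_{\psi,w}$ depends on $w$ only through which conditionals $\psi>\chi\in\Sigma_{\varphi}$ belong to $w$, so $U_{\psi,w}=U_{\psi,v}$. Membership $\overline{u}\in U_{\psi,w}$ depends on $u$ only through $\chi\in u$ with $\chi\in\Sigma_{\varphi}$, so it is independent of representatives. Second, the case split uses $\Diamond_{M}\psi$ and $\Diamond_{L}\psi$, and both lie in $\Sigma_{\varphi}$ by the closure clause for $\psi>\chi$. Hence $w$ and $v$ fall into the same case. Third, in the first case, item 1 gives $R_{M}^{*}(\overline{w})=R_{M}^{*}(\overline{v})$. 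Putting these together yields $f^{*}(\overline{w},\psi)=f^{*}(\overline{v},\psi)$.

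The main obstacle is a subtlety in how $\Diamond$ interacts with the closure. $\Diamond_{M}\psi$ abbreviates $\neg\Box_{M}\neg\psi$, so we must check that this exact formula is in $\Sigma_{\varphi}$; it is, because the closure explicitly adds $\Diamond_{M}\psi$ and $\Diamond_{L}\psi$ together with their subformulas. The other point to check is the remaining clause $f^{*}(\overline{w},\psi)=\varnothing$ for antecedents $\psi$ that do not occur in any conditional of $\Sigma_{\varphi}$. That clause is trivially representative-independent.
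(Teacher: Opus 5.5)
Your proposal is correct and follows essentially the same route as the paper: each item is checked by unpacking the definitions of $R_{M}^{*}$, $R_{L}^{*}$ and $f^{*}$. The key facts are that $w\sim v$ means agreement on $\Sigma_{\varphi}$, and that $\Diamond_{M}\psi,\Diamond_{L}\psi\in\Sigma_{\varphi}$ by the closure clause for conditionals, so $w$ and $v$ fall into the same branch of $f^{*}$; you are also slightly more careful than the paper in checking independence from the representative $u$ of the second argument and the trivial empty-set clause.
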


\begin{proof}
Suppose $w\sim v$. For any $\overline{u}\in R_{M}^{*}(\overline{w})$,
we have $\forall\Box_{M}\psi\in\Sigma_{\varphi},(\Box_{M}\psi\in w\Rightarrow\Box_{M}\psi\in u)$.
Since $w\sim v$, we have $\forall\psi\in\Sigma_{\varphi},(\psi\in w\iff\psi\in v)$,
which implies $\forall\Box_{M}\psi\in\Sigma_{\varphi},(\Box_{M}\psi\in v\Rightarrow\Box_{M}\psi\in u)$.
Thus, $\overline{u}\in R_{M}^{*}(\overline{v})$, and $R_{M}^{*}(\overline{w})\subseteq R_{M}^{*}(\overline{v})$.
The reverse inclusion $R_{M}^{*}(\overline{v})\subseteq R_{M}^{*}(\overline{w})$
follows similarly.

It is easy to verify that $R_{M}^{*}$ is reflexive, transitive, and
symmetric. The case for $R_{L}^{*}$ is analogous.

Suppose $w\sim v$. For any $\overline{u}\in f^{*}(\overline{w},\psi)$,
we have $\forall(\psi>\chi)\in\Sigma_{\varphi},(\psi>\chi\in w\Rightarrow\chi\in u)$.
Since $w\sim v$, we have $\forall\psi\in\Sigma_{\varphi},(\psi\in w\iff\psi\in v)$,
which implies $\forall(\psi>\chi)\in\Sigma_{\varphi},(\psi>\chi\in v\Rightarrow\chi\in u)$.
Given $\Diamond_{M}\psi,\Diamond_{L}\psi\in\Sigma_{\varphi}$, we
have $\overline{u}\in f^{*}(\overline{v},\psi)$, so $f^{*}(\overline{w},\psi)\subseteq f^{*}(\overline{v},\psi)$.
The reverse inclusion follows similarly.
\end{proof}

\begin{lemma}[Filtration Lemma] For all $\psi\in\Sigma_{\varphi}$
and all $w\in W^{c}$: \label{lem:filtration-lemma} 
\[
M^{c},w\Vdash\psi\iff M^{*},\overline{w}\Vdash\psi
\]
\end{lemma}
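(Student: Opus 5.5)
We prove the Filtration Lemma by induction on the complexity of $\psi\in\Sigma_{\varphi}$, after first disposing of anti-logical worlds. If $w\in I^{c}$, we check that $\overline{w}\in I^{*}$, i.e.\ $\overline{w}$ is not also the class of some MCS. Then $M^{*},\overline{w}\Vdash\psi$ iff $\overline{w}\in V^{*}(\psi)$ iff some $v\sim w$ has $\psi\in v$. Since $\psi\in\Sigma_{\varphi}$, this holds iff $\psi\in w$, and by the Truth Lemma (Lemma~\ref{lem:truth-lemma}) iff $M^{c},w\Vdash\psi$.

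The worry is that an arbitrary subset $w$ may agree on $\Sigma_{\varphi}$ with an MCS, so that $\overline{w}$ lands in $N^{*}$. When this happens we treat $\overline{w}$ as normal and reduce to the normal case. For every $\psi\in\Sigma_{\varphi}$ we have $\psi\in w$ iff $\psi\in v$ for the MCS $v$, and via the Truth Lemma both sides of the biconditional then reduce to membership in $v$. So in all cases it suffices to prove the lemma for $w\in N^{c}$, where by the Truth Lemma the claim becomes $\psi\in w\iff M^{*},\overline{w}\Vdash\psi$.

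For $w\in N^{c}$ the atomic and Boolean cases are immediate from $V^{*}$, the IH, and closure of $\Sigma_{\varphi}$ under subformulas and single negation.

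\textbf{Modal case $\Box_{M}\chi$.} ($\Rightarrow$) If $\Box_{M}\chi\in w$ and $\overline{w}R_{M}^{*}\overline{u}$, then $\Box_{M}\chi\in u$, so $\chi\in u$ by (TM), and the IH gives $\overline{u}\Vdash\chi$.

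($\Leftarrow$) If $\Box_{M}\chi\notin w$, the Truth Lemma yields some $u$ with $wR_{M}^{c}u$ and $\chi\notin u$. By S5 (axioms 4M and 5M), $w$ and $u$ agree on every $\Box_{M}$-formula, so $\overline{w}R_{M}^{*}\overline{u}$, and the IH gives $\overline{u}\nVdash\chi$. The $\Box_{L}$ case is identical. The inclusion $R_{M}^{*}\subseteq R_{L}^{*}$ is not needed for the lemma itself.

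\textbf{Conditional case $\psi>\chi$.} ($\Rightarrow$) Membership $\psi>\chi\in w$ puts $\chi$ in every $v$ with $\overline{v}\in U_{\psi,w}$, hence in every member of $f^{*}(\overline{w},\psi)$. The IH then gives $\overline{w}\Vdash\psi>\chi$.

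($\Leftarrow$) Suppose $\psi>\chi\notin w$. We need some $v$ with $\overline{v}\in f^{*}(\overline{w},\psi)$ and $\chi\notin v$. We split according to the three clauses of $f^{*}$.

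\emph{Clause $\Diamond_{L}\psi\notin w$.} Take the set $v=\{\theta\mid\psi>\theta\in w\}$. Then $\chi\notin v$, and $\overline{v}\in U_{\psi,w}$ by construction.

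\emph{Clause $\Diamond_{L}\psi\in w$.} Here the canonical $f^{c}(w,\psi)$, as analysed in the Truth Lemma proof, gives an MCS $v\supseteq\{\theta\mid\psi>\theta\in w\}\cup\{\neg\chi\}$. So $\overline{v}\in U_{\psi,w}\cap N^{*}$.

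\emph{Sub-clause $\Diamond_{M}\psi\in w$.} We must additionally secure $\overline{v}\in R_{M}^{*}(\overline{w})$. By the canonical (soc-m) of Lemma~\ref{lem:c-constraints-scp}, the canonical selection satisfies $v\in R_{M}^{c}(w)$. By S5, $wR_{M}^{c}v$ forces agreement on all $\Box_{M}$-formulas, so $\overline{v}R_{M}^{*}\overline{w}$.

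The IH applied to $\chi\notin v$ then yields $\overline{w}\nVdash\psi>\chi$.

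\textbf{Main obstacle.} I expect the delicate point to be the interaction between $I^{c}$ and $N^{*}$ described above, together with the fact that the IH is applied to arbitrary representatives $v$ of a class. Both are handled by the observation that only $\Sigma_{\varphi}$-membership matters, and that this membership is invariant under $\sim$.
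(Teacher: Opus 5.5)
Your proof is correct and follows essentially the same route as the paper: classes outside $N^{*}$ are handled directly through $V^{*}$, the $\Box_{M}$ and $\Box_{L}$ cases go through S5-agreement on boxed formulas in $\Sigma_{\varphi}$, and the conditional case splits along the three clauses of $f^{*}$ using the canonical witness from the Truth Lemma. In the $\Diamond_{M}$ sub-case you appeal to the canonical (soc-m) constraint, which only repackages the paper's direct use of SOC-M with 4M/5M. Your explicit reduction for a non-MCS $w$ whose class lies in $N^{*}$ is more careful than the paper, whose inductive cases silently treat $w$ as an MCS; apply the same care in the $\Box_{M}$ ($\Rightarrow$) step by choosing an MCS representative of $\overline{u}\in N^{*}$ before invoking (TM).
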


\begin{proof}
For any $\overline{w}\notin N^{*}$ and $\psi\in\Sigma_{\varphi}$,
the truth condition yields $M^{c},w\Vdash\psi\iff w\in V^{c}(\psi)\iff\overline{w}\in V^{*}(\psi)\iff M^{*},\overline{w}\Vdash\psi$.

For any $\overline{w}\in N^{*}$, we proceed by induction on the complexity
of $\psi$. The cases for atomic propositions and Boolean connectives
are straightforward.

Case $\psi=\Box_{M}\gamma$:

$(\Rightarrow)$ Suppose $M^{c},w\Vdash\Box_{M}\gamma$. For any $\overline{v}$
such that $\overline{w}R_{M}^{*}\overline{v}$, by the definition
of $R_{M}^{*}$, $\Box_{M}\gamma\in w\iff\Box_{M}\gamma\in v$. Thus
$\Box_{M}\gamma\in v$. Since $M^{c}$ is an S5 model, $\gamma\in v$.
By IH, $M^{*},\overline{v}\Vdash\gamma$. Therefore
$M^{*},\overline{w}\Vdash\Box_{M}\gamma$.

$(\Leftarrow)$ Suppose $M^{c},w\nVdash\Box_{M}\gamma$. In $M^{c}$,
there exists $v\in N^{c}$ such that $wR_{M}^{c}v$ and $v\nVdash\gamma$.
We show that $\overline{w}R_{M}^{*}\overline{v}$ holds. For any $\Box_{M}\chi\in\Sigma_{\varphi}$,
if $\Box_{M}\chi\in w$, then $\chi\in v$ because $wR_{M}^{c}v$;
by S5 properties, $\Box_{M}\chi\in v$ also holds. The converse is
similar. Thus $\overline{w}R_{M}^{*}\overline{v}$ holds. Since $M^{c},v\nVdash\gamma$,
IH yields $M^{*},\overline{v}\nVdash\gamma$,
hence $M^{*},\overline{w}\nVdash\Box_{M}\gamma$.

Case $\psi=\Box_{L}\gamma$: Analogous to the $\Box_{M}\gamma$ case.

Case $\psi=\alpha>\beta$:

$(\Rightarrow)$ Suppose $M^{c},w\Vdash\alpha>\beta$. For any $\overline{v}\in f^{*}(\overline{w},\alpha)$,
by the definition of $f^{*}$, we have $\forall(\alpha>\chi)\in\Sigma_{\varphi},(\alpha>\chi\in w\Rightarrow\chi\in v)$.
Thus $\beta\in v$. By IH, $M^{*},\overline{v}\Vdash\beta$.
Therefore $f^{*}(\overline{w},\alpha)\subseteq[[\beta]]^{*}$, i.e.,
$M^{*},\overline{w}\Vdash\alpha>\beta$.

$(\Leftarrow)$ Suppose $M^{c},w\nVdash\alpha>\beta$. In $M^{c}$,
there exists a world $u$ such that $\{\chi\mid\alpha>\chi\in w\}\subseteq u$
and $\beta\notin u$. Consider the equivalence class $\overline{u}$. 

If $\Diamond_{M}\alpha\in w$: For any $\Box_{M}\gamma\in\Sigma_{\varphi}$,
if $\Box_{M}\gamma\in w$, then $\alpha>\gamma\in w$ (by SOC-M).
From the construction of $u$, $\gamma\in u$. Since $M^{c}$ is S5,
$\Box_{M}\gamma\in w\Rightarrow\Box_{M}\Box_{M}\gamma\in w\Rightarrow\alpha>\Box_{M}\gamma\in w\Rightarrow\Box_{M}\gamma\in u$.
The converse $\neg\Box_{M}\gamma\in w\Rightarrow\neg\Box_{M}\gamma\in u$
follows similarly. Thus $\overline{w}R_{M}^{*}\overline{u}$ holds
by the definition of $R_{M}^{*}$, so $\overline{u}\in f^{*}(\overline{w},\alpha)$.
By IH, $M^{*},\overline{u}\nVdash\beta$. Hence
$M^{*},\overline{w}\nVdash\alpha>\beta$. 

If $\Diamond_{M}\alpha\notin w$ and $\Diamond_{L}\alpha\in w$. By the
properties of the selection function in the canonical model, we have
$u\in N^{c}$, which implies $\overline{u}\in N^{*}$. It follows
directly that $\overline{u}\in f^{*}(\overline{w},\alpha)$. By the
induction hypothesis, $M^{*},\overline{u}\nVdash\beta$, hence $M^{*},\overline{w}\nVdash\alpha>\beta$. 

If $\Diamond_{L}\alpha\notin w$: Clearly $\overline{u}\in f^{*}(\overline{w},\alpha)$.
By IH, $M^{*},\overline{u}\nVdash\beta$. Hence
$M^{*},\overline{w}\nVdash\alpha>\beta$. 
\end{proof}

\begin{lemma} The filtration model satisfies the semantic constraints:
for all $\psi>\chi\in\Sigma_{\varphi}$ and $\overline{w}\in N^{*}$,
\label{lem:filter-constraints} 
\[
\begin{array}{@{}ll@{\qquad}ll@{}}
\text{(modal)} & R_{M}^{*}(\overline{w})\subseteq R_{L}^{*}(\overline{w}) & \text{(id)} & f^{*}(\overline{w},\psi)\subseteq[[\psi]]^{*}\\
\text{(soc-m)} & \text{If }M^{*},\overline{w}\Vdash\Diamond_{M}\psi,\ \text{then }f^{*}(\overline{w},\psi)\subseteq R_{M}^{*}(\overline{w}) & \text{(sic)} & \text{If }M^{*},\overline{w}\Vdash\Diamond_{L}\psi,\ \text{then }f^{*}(\overline{w},\psi)\subseteq N^{*}
\end{array}
\]
\end{lemma}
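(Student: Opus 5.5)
I will verify the four constraints one at a time, each time reducing a fact about $M^{*}$ to a fact about the canonical model. The tools are the Filtration Lemma (Lemma \ref{lem:filtration-lemma}), the Truth Lemma (Lemma \ref{lem:truth-lemma}), the closure conditions of $\Sigma_{\varphi}$, and the well-definedness result (Lemma \ref{lem:welldefinescp}). The last one ensures I may reason with any representative $w$ of $\overline{w}$.

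\textbf{(modal).} Let $\overline{v}\in R_{M}^{*}(\overline{w})$ and take $\Box_{L}\gamma\in\Sigma_{\varphi}$. By the closure condition, $\Box_{M}\Box_{L}\gamma\in\Sigma_{\varphi}$. If $\Box_{L}\gamma\in w$, then Item 3 of Proposition \ref{prop:necessity-LM} (derivable in the S5+MODAL fragment) gives $\Box_{M}\Box_{L}\gamma\in w$. Since $\overline{w}R_{M}^{*}\overline{v}$, we get $\Box_{M}\Box_{L}\gamma\in v$, and by (TM) $\Box_{L}\gamma\in v$. For the converse I apply the same argument with $v$ and $w$ exchanged, using that $R_{M}^{*}$ is symmetric by its biconditional definition. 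Hence $\overline{w}R_{L}^{*}\overline{v}$.

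\textbf{(id).} Let $\overline{v}\in f^{*}(\overline{w},\psi)$. The closure puts $\psi>\psi\in\Sigma_{\varphi}$, and axiom (ID) gives $\psi>\psi\in w$. The definition of $U_{\psi,w}$ then yields $\psi\in v$. By the Filtration Lemma, $M^{*},\overline{v}\Vdash\psi$. When $v\in I^{c}$ this is just $V^{*}$ membership, which the Filtration Lemma also covers.

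\textbf{(soc-m) and (sic).} Suppose $M^{*},\overline{w}\Vdash\Diamond_{M}\psi$. Since $\Diamond_{M}\psi\in\Sigma_{\varphi}$ (it abbreviates $\neg\Box_{M}\neg\psi$), the Filtration Lemma and the Truth Lemma give $\Diamond_{M}\psi\in w$. So the first clause of $f^{*}$ applies, and $f^{*}(\overline{w},\psi)\subseteq R_{M}^{*}(\overline{w})$ is immediate. For (sic), suppose $\Diamond_{L}\psi\in w$. If also $\Diamond_{M}\psi\in w$, then $f^{*}(\overline{w},\psi)\subseteq R_{M}^{*}(\overline{w})\subseteq N^{*}$, because $R_{M}^{*}$ is defined only on $N^{*}$. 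Otherwise the second clause gives $U_{\psi,w}\cap N^{*}$.

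\textbf{Main obstacle.} I expect the delicate points to be bookkeeping rather than substance. First, I must check that $\Diamond_{M}\psi$, $\Diamond_{L}\psi$ and their negated box forms, including $\neg\psi$, actually lie in $\Sigma_{\varphi}$, so that the Filtration Lemma is applicable. Second, the (modal) case depends essentially on the added closure formula $\Box_{M}\Box_{L}\gamma$. I would state explicitly that $\vdash\Box_{L}\gamma\to\Box_{M}\Box_{L}\gamma$ holds syntactically, via (4L) and (MODAL), so it belongs to every MCS.
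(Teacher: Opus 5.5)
Your proposal is correct and follows the paper's proof essentially step for step: the closure formula $\Box_M\Box_L\gamma$ with (TM) and the symmetry of $R_M^{*}$ for (modal), $\psi>\psi\in\Sigma_\varphi$ with $f^{*}(\overline{w},\psi)\subseteq U_{\psi,w}$ for (id), and reading (soc-m) and (sic) off the case split in $f^{*}$. Two of your details are slightly cleaner than the paper's: you derive $\Box_L\gamma\to\Box_M\Box_L\gamma$ from (4L) and (MODAL) rather than appealing to a semantic validity, and you handle the subcase $\Diamond_M\psi\in w$ of (sic) explicitly rather than writing ``by definition''.

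The only thing to tighten is the claim that you may reason with any representative. Whenever you apply an axiom or theorem ``at $w$'' or ``at $v$'', take an MCS representative, which exists because $\overline{w},\overline{v}\in N^{*}$, and then transfer membership along $\sim$ using that the formulas involved lie in $\Sigma_\varphi$, exactly as the paper does with $w'\in\overline{w}\cap N^{c}$. An arbitrary representative may lie in $I^{c}$, where derivability does not imply membership.
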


\begin{proof}
(modal): For any $\overline{v}\in R_{M}^{*}(\overline{w})$, we show
$\overline{w}R_{L}^{*}\overline{v}$, i.e., for all $\Box_{L}\psi\in\Sigma_{\varphi}$,
$\Box_{L}\psi\in w\iff\Box_{L}\psi\in v$. Suppose $\Box_{L}\psi\in w$.
Since $\Box_{L}\psi\in\Sigma_{\varphi}$, we have $\Box_{M}\Box_{L}\psi\in\Sigma_{\varphi}$
by Definition \ref{def:subformula-closure}. Since $\overline{w}\in N^{*}$,
$w$ contains at least one MCS $w'\in N^{c}$. By Proposition \ref{prop:necessity-LM}.3,
$\vDash\Box_{L}\psi\to\Box_{M}\Box_{L}\psi$, so $\Box_{M}\Box_{L}\psi\in w'$.
By the definition of $R_{M}^{*}$, $\Box_{M}\Box_{L}\psi\in v$. By
axiom TM and $v\sim v'$, we have $\Box_{L}\psi\in v'$ and thus $\Box_{L}\psi\in v$.
For the other direction, suppose $\Box_{L}\psi\in v$. Let $v'\in\overline{v}\cap N^{c}$.
By Axiom $\Box_{L}\psi\to\Box_{M}\Box_{L}\psi$, we have $\Box_{M}\Box_{L}\psi\in v'$.
Thus $\Box_{M}\Box_{L}\psi\in v$. As $R_{M}^{*}$ is symmetric, we
have $\Box_{M}\Box_{L}\psi\in w$. Let $w'\in\overline{w}\cap N^{c}$.
By Axiom TM, we have $\Box_{L}\psi\in w'$. Thus $\Box_{L}\psi\in w$.

(id): Let $\overline{w}\in N^{*}$. Let $v'\in\overline{w}\cap N^{c}$.
By Axiom ID, $\psi>\psi\in w'$. As $w\sim w'$, we also have $\psi>\psi\in w$.
For any $\overline{v}\in U_{\psi,w}$, by the definition of $U_{\psi,w}$,
it follows that $\psi\in v$. By the filtration lemma, $M^{*},\overline{v}\Vdash\psi$,
i.e., $f^{*}(\overline{w},\psi)\subseteq[[\psi]]^{*}$.

(soc-m): Suppose $M^{*},\overline{w}\Vdash\Diamond_{M}\psi$. By Lemma
\ref{lem:filtration-lemma}, $\Diamond_{M}\psi\in w$. By definition,
$f^{*}(\overline{w},\psi)\subseteq R_{M}^{*}(\overline{w})$.

(sic): Suppose $M^{*},\overline{w}\Vdash\Diamond_{L}\psi$. By Lemma
\ref{lem:filtration-lemma}, $\Diamond_{L}\psi\in w$. By definition,
$f^{*}(\overline{w},\psi)\subseteq N^{*}$.
\end{proof}

\begin{theorem} $\mathbf{SCP}$ has finite model property and
is decidable. \end{theorem}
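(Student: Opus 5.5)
We combine the filtration lemmas above with completeness in the standard way. The plan is first to establish the finite model property and then to obtain decidability from it.

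For the finite model property, suppose $\nvdash_{SCP}\varphi$. Then $\{\neg\varphi\}$ is consistent, and by Lindenbaum's lemma it extends to an MCS $w\in N^{c}$. By the Truth Lemma (Lemma~\ref{lem:truth-lemma}), $M^{c},w\nVdash\varphi$. Now form the filtration $M^{*}$ through $\Sigma_{\varphi}$.

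\begin{itemize}
\item By Lemma~\ref{lem:welldefinescp}, $M^{*}$ is well defined and $R_{M}^{*},R_{L}^{*}$ are equivalence relations on $N^{*}$.
\item By Lemma~\ref{lem:filter-constraints}, $M^{*}$ satisfies (modal), (id), (soc-m) and (sic), so it is an $\mathbf{SCP}$ model.
\item Since $\varphi\in\Sigma_{\varphi}$, the Filtration Lemma~\ref{lem:filtration-lemma} gives $M^{*},\overline{w}\nVdash\varphi$, with $\overline{w}\in N^{*}$.
\item $W^{*}$ has at most $2^{|\Sigma_{\varphi}|}$ elements, because $\sim$-classes are determined by subsets of $\Sigma_{\varphi}$.
\end{itemize}

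Two details need checking, and one gap needs patching.

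\begin{itemize}
\item \textbf{Finiteness of $\Sigma_{\varphi}$.} The closure rules only add boundedly many formulas of bounded complexity. The one nontrivial case is $\Box_{M}\Box_{L}\psi$: its subformula $\Box_{L}\psi$ is already present, and $\Box_{M}\Box_{L}\psi$ itself triggers only the $\Box_{M}$-rule, not a further $\Box_{L}$ expansion. So the closure terminates, and a simple induction bounding $|\Sigma_{\varphi}|$ by a polynomial in $|\varphi|$ confirms this.
\item \textbf{Coverage of $f^{*}$ and $V^{*}$.} Constraints (id), (soc-m) and (sic) must hold for every antecedent, not only those occurring in $\Sigma_{\varphi}$. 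For antecedents outside $\Sigma_{\varphi}$, $f^{*}$ was set to $\varnothing$, which satisfies all three vacuously. $V^{*}$ is needed only on formulas of $\Sigma_{\varphi}$; elsewhere we can set it to $\varnothing$ without affecting truth values in $\Sigma_{\varphi}$.
\item \textbf{Choice of MCS representative (the main obstacle).} In Lemma~\ref{lem:filter-constraints} the arguments pick an MCS representative $w'\in\overline{w}\cap N^{c}$, but a class $\overline{w}$ could also contain non-MCS sets in $I^{c}$. This is harmless because $N^{*}$ is defined through MCS representatives and $\sim$ preserves membership of all formulas in $\Sigma_{\varphi}$. Still, a class could contain both an MCS and an arbitrary set agreeing with it on $\Sigma_{\varphi}$, which would make $N^{*}\cap(W^{*}-N^{*})$ ambiguous. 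To avoid this, I would restrict $\sim$ so that it never identifies an element of $N^{c}$ with an element of $I^{c}$ (add the clause ``$w\in N^{c}\iff v\in N^{c}$''). This at most doubles the bound to $2^{|\Sigma_{\varphi}|+1}$, and all previous lemmas go through verbatim.
\end{itemize}

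Decidability then follows by the usual argument. By soundness (Theorem~\ref{theorem:scp-soundness}) and completeness, $\varphi$ is a theorem iff it holds at every normal world of every $\mathbf{SCP}$ model. By the finite model property, this holds iff $\varphi$ holds in all models of size at most $2^{|\Sigma_{\varphi}|+1}$. These models can be enumerated up to isomorphism: $f$ and $V$ only matter on the finitely many relevant formulas, and the constraints (modal), (id), (soc-m), (sic) are finitely checkable, restricted to antecedents in $\Sigma_{\varphi}$ since the others may be set to $\varnothing$. Evaluating $\varphi$ in each such model is effective, so theoremhood is decidable. Alternatively, the axiomatization is recursively enumerable and non-theorems are witnessed by finite countermodels, so both the theorems and the non-theorems are r.e.
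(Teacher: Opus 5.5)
Your proposal is correct and follows the route the paper sets up: you filter the canonical model through $\Sigma_{\varphi}$ via Lemmas~\ref{lem:welldefinescp}--\ref{lem:filter-constraints} to get a countermodel of size at most $2^{|\Sigma_{\varphi}|}$, then obtain decidability from the finite model property together with the recursive axiomatization. Your extra clause keeping $N^{c}$ and $I^{c}$ apart in $\sim$ is harmless but unnecessary: $N^{*}$ and $W^{*}-N^{*}$ are disjoint by definition, and by the Truth Lemma membership coincides with truth for every element of $W^{c}$, so a non-MCS set lying in a class of $N^{*}$ has the same $\Sigma_{\varphi}$-truths as the MCS in that class, which means the Filtration Lemma already covers it.
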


\section{Non-vacuous Semantics \texorpdfstring{($\mathbf{SCP1}$)}{(SCP1)}}

\label{sec:scp1}

The preceding section established a general semantic framework for $\mathbf{SCP}$. While flexible, $\mathbf{SCP}$ exhibits two theoretical vulnerabilities regarding non-vacuism and logical preservation.

First, $\mathbf{SCP}$ permits the selection function $f$ to be empty. For an impossible antecedent $\varphi$, a model can simply assign $f(w, \varphi) = \varnothing$, making $\varphi > \psi$ vacuously true. If impossible worlds are rich enough, this is counterintuitive and renders the non-vacuist agenda incompletely realized.

Second, the logical counterpart to (SOC-M), namely (SOC-L): $\Diamond_L \varphi \land \Box_L \psi \to (\varphi > \psi)$, is invalid in the base system. Consider a model where $N=\{w, v\}$ with disjoint logical clusters $R_L(w)=\{w\}$ and $R_L(v)=\{v\}$. Let $w \in V(\varphi) \cap V(\psi)$ and $v \in V(\varphi) \setminus V(\psi)$. Clearly, $w \Vdash \Diamond_L \varphi \land \Box_L \psi$. However, the semantic constraint (sic) only requires $f(w, \varphi) \subseteq N$. If the selection yields $f(w, \varphi) = \{v\}$, we get $w \not\Vdash \varphi > \psi$. This invalidity highlights a flaw in $\mathbf{SCP}$: the evaluation of conditionals can cross into disconnected logical clusters, causing established logical necessities ($\Box_L \psi$) to fail.

To provide a more rigorous characterization of counterpossible reasoning, we introduce the extended system $\mathbf{SCP1}$. This
system augments the semantics with two core constraints: \textbf{(ne)}
(non-vacuism) ensures $f$ is never empty by evaluating contradictions
within a sufficiently rich set of impossible worlds $I$; and \textbf{(rl)}
(universal accessibility) treats $R_{L}$ as a universal relation
over $N$, reflecting the global consistency of logical laws.

\subsection{Semantics}

\begin{definition}[$\mathbf{SCP1}$ Constraints] In addition to
Definition \ref{def:scp-constraints}, $\mathbf{SCP1}$ models must
satisfy: 
\begin{itemize}
\item[{(rl)}] $R_{L}(w)=N$ for all $w\in N$. 
\item[{(ne)}] $f(w,\varphi)\neq\varnothing$ for all $w\in N,\varphi\in\mathscr{L}$. 
\end{itemize}
\end{definition}
\begin{figure}[htbp]
    \centering
    \begin{subfigure}{0.48\textwidth}
        \centering
        \resizebox{\textwidth}{!}{
            \begin{tikzpicture}[
                outerworld/.style={rectangle, draw, minimum width=10cm, minimum height=7.5cm, ultra thick},
                normalzone/.style={ellipse, draw, fill=gray!5, minimum width=7.5cm, minimum height=5.5cm, thick},
                rlcluster/.style={ellipse, draw, dashed, blue!70, fill=blue!5, minimum width=3.5cm, minimum height=2.5cm, thick},
                rmclass/.style={circle, draw, dotted, black, fill=white, minimum size=1.2cm, inner sep=0pt}
            ]
                \node[outerworld] (W) at (0,0) {};
                \node[anchor=north west, font=\huge\bfseries] at (W.north west) {$W$};

                \node[font=\huge] at (-4, 3) {$I$};
                \node[font=\large, text=gray] at (-3.1, 2.6) {(Impossible Worlds)};

                \node[normalzone] (N) at (0,0) {};
                \node[anchor=north, font=\Large\bfseries] at (N.north) {$N$};

                \node[rlcluster] (RL1) at (-1.6, 0.3) {};
                \node[rlcluster] (RL2) at (1.7, -0.7) {};
                \node[blue, font=\large\bfseries] at (-1.5, 1.9) {$R_L$-cluster 1};
                \node[blue, font=\large\bfseries] at (1.8, 0.7) {$R_L$-cluster 2};

                \node[rmclass] at (-2.2, 0.4) {\large $R_M$};

                \node[rmclass] at (-0.9, -0.1) {\large $R_M$};
                \node[rmclass] at (1.0, -0.8) {\large $R_M$};
 
                \node[rmclass] at (2.6, -1.1) {\large $R_M$};

                \node[draw, fill=white, shift={(-0.2,0.2)}, anchor=south east] at (W.south east) {
                    \large
                    \begin{tabular}{ll}
                    \tikz[baseline=-0.5ex]\draw[dashed, blue, ultra thick] (0,0) -- (0.6,0); &  $R_L$ \\
                    \tikz[baseline=-0.5ex]\draw[dotted, black, ultra thick] (0,0) -- (0.6,0); &  $R_M$ \\
                    \end{tabular}
                };
            \end{tikzpicture}
        }
        \caption{The Stratified World Model of the $\mathbf{SCP}$ System}
        \label{fig:scpmodel}
    \end{subfigure}
    \hfill
    \begin{subfigure}{0.48\textwidth}
        \centering
        \resizebox{\textwidth}{!}{
             \begin{tikzpicture}[
                outerworld/.style={rectangle, draw, minimum width=10cm, minimum height=7.5cm, ultra thick},
                normalzone/.style={ellipse, draw, fill=gray!2, minimum width=7.5cm, minimum height=5.5cm, thick},
                rluniversal/.style={ellipse, draw, dashed, blue!70, fill=blue!2, minimum width=7.2cm, minimum height=5.2cm, thick},
                rmclass/.style={circle, draw, dotted, black, fill=white, minimum size=1.2cm, inner sep=0pt},
                rlprime/.style={ellipse, draw, dashed, gray!40, minimum width=1.5cm, minimum height=1cm, thin}
            ]
                \node[outerworld] (W) at (0,0) {};
                \node[anchor=north west, font=\huge\bfseries] at (W.north west) {$W$};

                \node[rlprime, rotate=30] at (-3.8, 2.2) {};
                \node[rlprime, rotate=-15] at (3.5, 2.5) {};
                \node[gray!50, font=\large] at (-3.8, 2.2) {$R'_L$};
                \node[gray!50, font=\large] at (3.5, 2.5) {$R'_L$};
                \node[font=\huge] at (-4.2, 3) {$I$};

                \node[normalzone] (N) at (0,0) {};
                \node[anchor=south, font=\Large\bfseries] at (N.north) {$N$};

                \node[rluniversal] (RL) at (0,0) {};
                \node[blue, font=\large\bfseries] at (0, 2.1) {$R_L$ (Universal on $N$)};

                \node[rmclass] at (-2.0, 1.0) {\large $R_M$};
                \node[rmclass] at (-1.3, -0.6) {\large $R_M$};
                \node[rmclass] at (0.2, 0.9) {\large $R_M$};
                \node[rmclass] at (1.9, 0.3) {\large $R_M$};
                \node[rmclass] at (0.8, -1.3) {\large $R_M$};
                \node[rmclass] at (2.5, -0.9) {\large $R_M$};

                \node[draw, fill=white, shift={(-0.2,0.2)}, anchor=south east] at (W.south east) {
                    \large
                    \begin{tabular}{ll}
                    \tikz[baseline=-0.5ex]\draw[dashed, blue, ultra thick] (0,0) -- (0.6,0); &  $R_L$ \\
                    \tikz[baseline=-0.5ex]\draw[dotted, black, ultra thick] (0,0) -- (0.6,0); &  $R_M$ \\
                    \tikz[baseline=-0.5ex]\draw[dashed, gray!50, thick] (0,0) -- (0.6,0); &  $R'_L$ \\
                    \end{tabular}
                };
            \end{tikzpicture}
        }
        \caption{Layered World Model of the $\mathbf{SCP1}$ System}
    \label{fig:scp1model}
    \end{subfigure}
    \caption{Stratified World Models for $\mathbf{SCP}$ and $\mathbf{SCP1}$}
\end{figure}
Figure \ref{fig:scp1model} illustrates the distinctions of $\mathbf{SCP1}$: 

\textbf{Universal Logical Accessibility ($R_{L}$)}: $R_{L}$ is defined as a universal relation over $N$. Since all normal worlds share the same classical laws, they are mutually accessible at the logical level, with metaphysical equivalence classes ($R_{M}$) nested within this universal domain.
    
\textbf{Evaluation of Contradictions}: The $I$ region contains clusters
($R'_{L}$) that obey alternative logical rules.\footnote{See \cite{bjerring2014Counterpossibles} on using non-classical closure
sets as impossible worlds.} Although these regions
are not the focus of the system's deduction, their existence provides
a rational explanation for why the \textbf{(ne)} constraint can provide
concrete evaluation worlds for inconsistent hypotheses. 

Let $\mathsf{SCP1}$ be the class of models satisfying
these constraints; $\vDash_{\mathsf{SCP1}}$ is hereafter abbreviated
as $\vDash$.

\subsection{Soundness and completeness}

The axiomatic system for $\mathbf{SCP1}$ is an extension of the base
system $\mathbf{SCP}$, obtained by adding the following two axioms:
\begin{align*}
 & \text{(NE)} &  & \Diamond_{L}\varphi\to\neg(\varphi>\neg\varphi)\\
 & \text{(SOC-L)} &  & \Diamond_{L}\varphi\land\Box_{L}\psi\to(\varphi>\psi)
\end{align*}

\begin{lemma} \label{lem:consistent-N} $[\varphi]\cap[\neg\varphi]\cap N=\varnothing$.
\end{lemma}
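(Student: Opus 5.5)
This is immediate from the classical truth clause for negation at normal worlds, in Definition \ref{def:scp-truth}. I would argue by contradiction: suppose some $w$ lies in $[\varphi]\cap[\neg\varphi]\cap N$.

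Because $w\in N$, the truth of $\neg\varphi$ at $w$ is computed recursively, not read off $V$. So $M,w\Vdash\neg\varphi$ holds iff $M,w\nVdash\varphi$. On the other hand, $w\in[\varphi]$ gives $M,w\Vdash\varphi$. These two facts contradict each other, so no such $w$ exists and the intersection is empty.

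The one point needing care is that $[\cdot]$ ranges over all of $W$. For worlds in $I$, membership in $[\varphi]$ and $[\neg\varphi]$ is fixed independently by $V$, so both can hold; this is why the lemma intersects with $N$. The argument therefore uses only the normal-world clause for $\neg$. It needs no induction on $\varphi$ and none of the constraints (modal), (id), (soc-m), (sic), (rl) or (ne). Hence it holds in every $\mathbf{SCP}$ model, and in particular in every $\mathbf{SCP1}$ model.

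I do not expect any real obstacle. The only thing to check is that the negation clause at $w\in N$ quantifies over truth at $w$ itself, which is defined by the recursion even when $\varphi$ is a conditional or modal formula whose evaluation refers to worlds in $I$.

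I anticipate the lemma will be used in the soundness of (NE). Given $\Diamond_{L}\varphi$, (sic) yields $f(w,\varphi)\subseteq N$, and (ne) makes $f(w,\varphi)$ nonempty. Then (id) puts $f(w,\varphi)$ inside $[\varphi]$, so the present lemma shows it cannot also lie inside $[\neg\varphi]$.
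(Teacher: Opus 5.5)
Your proof is correct and matches the paper's: both rest solely on the normal-world clause $M,w\Vdash\neg\varphi\iff M,w\nVdash\varphi$, with no induction and no frame constraints needed. Your anticipated use of the lemma in the soundness proof for (NE) is also how the paper applies it.
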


\begin{proof}
On the set of normal worlds $N$, truth values are determined by the
recursive definition in Definition \ref{def:scp-truth}. According
to the condition $M,w\Vdash\neg\varphi\iff M,w\nVdash\varphi$, it
follows that $[\varphi]\cap[\neg\varphi]\cap N=\varnothing$.
\end{proof}

\begin{theorem}[Soundness of $\mathbf{SCP1}$]\label{soundness-scp1}
For any $\Gamma\cup\{\varphi\}\subseteq\mathscr{L}$, if $\Gamma\vdash_{\mathbf{SCP1}}\varphi$,
then $\Gamma\vDash_{\mathbf{SCP1}}\varphi$. \end{theorem}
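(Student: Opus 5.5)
Since every $\mathbf{SCP1}$ model is an $\mathbf{SCP}$ model, I would reduce to Theorem~\ref{theorem:scp-soundness}. The axioms and rules of $\mathbf{SCP}$ are already valid on the class $\mathsf{SCP}$, hence on the subclass $\mathsf{SCP1}$. Validity is evaluated only at normal worlds, and MP and the necessitation rules preserve validity over $N$ in the usual way; for (RN$_L$) and (RN$_M$) this works because $R_L,R_M\subseteq N\times N$. The (SIC) rule argument carries over verbatim. So the work reduces to checking the two new axioms (NE) and (SOC-L) at an arbitrary $w\in N$ of an arbitrary $\mathbf{SCP1}$ model. Soundness then follows by induction on derivation length, with $\Gamma$ handled as in the $\mathbf{SCP}$ case.

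\textbf{(SOC-L).} Suppose $M,w\Vdash\Diamond_L\varphi\land\Box_L\psi$ with $w\in N$. By (rl), $R_L(w)=N$, so $\Box_L\psi$ gives $N\subseteq[\psi]$. By (sic), $\Diamond_L\varphi$ gives $f(w,\varphi)\subseteq N$. Hence $f(w,\varphi)\subseteq[\psi]$, that is, $M,w\Vdash\varphi>\psi$. This is where (rl) is essential: it blocks the counterexample with disconnected clusters described at the start of Section~\ref{sec:scp1}.

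\textbf{(NE).} Suppose $M,w\Vdash\Diamond_L\varphi$ and, towards a contradiction, $M,w\Vdash\varphi>\neg\varphi$, so $f(w,\varphi)\subseteq[\neg\varphi]$. By (id), $f(w,\varphi)\subseteq[\varphi]$. By (sic), $f(w,\varphi)\subseteq N$. Hence $f(w,\varphi)\subseteq[\varphi]\cap[\neg\varphi]\cap N$, which is empty by Lemma~\ref{lem:consistent-N}. This contradicts (ne), which requires $f(w,\varphi)\neq\varnothing$. Therefore $M,w\nVdash\varphi>\neg\varphi$, and since $w\in N$, $M,w\Vdash\neg(\varphi>\neg\varphi)$.

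The only delicate point is (NE). It needs (id), (sic), (ne) and Lemma~\ref{lem:consistent-N} together. The hypothesis $\Diamond_L\varphi$ is what forces the selected worlds into $N$, where $\varphi$ and $\neg\varphi$ cannot both hold; without it the selected worlds could lie in $I$ and satisfy both. Everything else is routine.
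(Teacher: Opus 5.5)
Your proposal is correct and follows the paper's proof: both reduce to the soundness of $\mathbf{SCP}$, then verify (SOC-L) from (rl) and (sic), and (NE) from (sic), (id), (ne) and Lemma~\ref{lem:consistent-N}. The paper leaves two points implicit that you make explicit. One is the appeal to (rl) in the (SOC-L) case. The other is that the rules, especially (SIC), whose premise may now be an $\mathbf{SCP1}$-theorem, must be rechecked as validity-preserving over $\mathsf{SCP1}$ rather than simply inherited.
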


\begin{proof}
Refer to Theorem \ref{theorem:scp-soundness}. We only need to prove
the validity of (SOC-L) and (NE).

\textbf{(SOC-L)}: Suppose $w\in N$ and $M,w\Vdash\Diamond_{L}\varphi\land\Box_{L}\psi$
holds. Since $M,w\Vdash\Box_{L}\psi$, we have $N\subseteq[\psi]$.
Since $M,w\Vdash\Diamond_{L}\varphi$, it follows from the semantic
constraints that $f(w,\varphi)\subseteq N$. Thus, $f(w,\varphi)\subseteq[\psi]$,
which means $M,w\Vdash\varphi>\psi$.

\textbf{(NE)}: Suppose $w\in N$ and $M,w\Vdash\Diamond_{L}\varphi$
holds. This implies $f(w,\varphi)\subseteq N$. By constraint (ne),
$f(w,\varphi)\neq\varnothing$; thus, there exists $u\in f(w,\varphi)$.
By constraint (id), $f(w,\varphi)\subseteq[\varphi]$. According to
Lemma \ref{lem:consistent-N}, $f(w,\varphi)\nsubseteq[\neg\varphi]$.
Therefore, $M,w\nVdash\varphi>\neg\varphi$, which is equivalent to
$M,w\Vdash\neg(\varphi>\neg\varphi)$.
\end{proof}

We construct a generated submodel $M^{cg}$ based on the canonical
model $M^{c}$ defined in Definition \ref{def:canonical-model}, except
that MCSs are based on $\mathbf{SCP1}$.

\begin{definition} Let $\Gamma$ be an $\mathbf{SCP1}$-consistent
set, and let $w_{\Gamma}\in N^{c}$ be a maximal consistent set (MCS)
containing $\Gamma$. We construct the generated submodel $M^{cg}=(W^{cg},N^{cg},R_{M}^{cg},R_{L}^{cg},f^{cg},V^{cg})$
as follows: \end{definition}
\begin{itemize}
\item $W^{cg}=\mathcal{P}(\mathscr{L})$ 
\begin{itemize}
\item $N^{cg}=N^{c}\cap R_{L}^{c}(w_{\Gamma})$ 
\item $I^{cg}=W^{c}-N^{cg}$ 
\end{itemize}
\item $R_{M}^{cg}=R_{M}^{c}\cap(R_{L}^{c}(w_{\Gamma})\times R_{L}^{c}(w_{\Gamma}))$ 
\item $R_{L}^{cg}=R_{L}^{c}\cap(R_{L}^{c}(w_{\Gamma})\times R_{L}^{c}(w_{\Gamma}))$ 
\item $f^{cg}(w,\varphi)=\begin{cases}
\{x\in N^{cg}\mid\{\psi\mid\varphi>\psi\in w\}\subseteq x\} & \text{if }\Diamond_{L}\varphi\in w\\
\{x\in W^{cg}\mid\{\psi\mid\varphi>\psi\in w\}\subseteq x\} & \text{if }\Diamond_{L}\varphi\notin w
\end{cases}$ 
\item $V^{cg}(\varphi)=V^{c}(\varphi)$ 
\end{itemize}
Let $[[\varphi]]^{cg}:=\{w\in W^{cg}\mid M^{cg},w\Vdash\varphi\}$.

Through the construction of the generated submodel, we first verify
whether it satisfies the universal relation requirement for logical
accessibility in $\mathbf{SCP1}$.

\begin{lemma} $R_{L}^{cg}$ is a universal relation on $N^{cg}$.
\label{lem:universal-RL} \end{lemma}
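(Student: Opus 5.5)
The plan is to unfold the definitions of $N^{cg}$ and $R_{L}^{cg}$ and reduce the claim to the fact that $R_{L}^{c}$ is an equivalence relation on $N^{c}$.

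\textbf{Step 1: $R_{L}^{c}$ is an equivalence relation.} Since the $\mathbf{SCP1}$ axioms include (TL), (4L) and (5L), the standard canonical-model argument applies. Axiom (TL) gives reflexivity of $R_{L}^{c}$ on $N^{c}$, (4L) gives transitivity, and (TL) together with (5L) gives symmetry, i.e.\ the B-property. This is the same argument tacitly used in Lemma~\ref{lem:c-constraints-scp}, where $M^{c}$ is treated as an $\mathbf{S5}$ model. So I will only recall it briefly. For symmetry: if $xR_{L}^{c}y$ and $\Box_{L}\chi\in y$ but $\chi\notin x$, then $\Diamond_{L}\neg\chi\in x$. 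By (5L), $\Box_{L}\Diamond_{L}\neg\chi\in x$, hence $\Diamond_{L}\neg\chi\in y$, which contradicts $\Box_{L}\chi\in y$.

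\textbf{Step 2: universality.} First, $R_{L}^{cg}\subseteq N^{cg}\times N^{cg}$ by definition, because $R_{L}^{c}$ only relates members of $N^{c}$ and we intersect with $R_{L}^{c}(w_{\Gamma})\times R_{L}^{c}(w_{\Gamma})$. It therefore suffices to show that any $x,y\in N^{cg}$ satisfy $xR_{L}^{c}y$. By definition of $N^{cg}$ we have $w_{\Gamma}R_{L}^{c}x$ and $w_{\Gamma}R_{L}^{c}y$. Symmetry gives $xR_{L}^{c}w_{\Gamma}$, and transitivity then gives $xR_{L}^{c}y$. Since also $(x,y)\in R_{L}^{c}(w_{\Gamma})\times R_{L}^{c}(w_{\Gamma})$, it follows that $(x,y)\in R_{L}^{cg}$. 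Hence $R_{L}^{cg}(x)=N^{cg}$ for every $x\in N^{cg}$, which is exactly constraint (rl). Reflexivity also ensures $w_{\Gamma}\in N^{cg}$, so $N^{cg}$ is nonempty and the generated model is not degenerate.

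\textbf{Main obstacle.} There is no real obstacle. The only point needing care is Step 1, in particular symmetry, because that is what turns the ``star'' of worlds reachable from $w_{\Gamma}$ into a full clique. I will also make sure the restriction to $R_{L}^{c}(w_{\Gamma})$ does not cut away any needed pairs. It does not, since both endpoints already lie in that set.
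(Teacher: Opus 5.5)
Your proof is correct and uses the same ingredients as the paper's, namely TL, 4L and 5L applied to members of $R_{L}^{c}(w_{\Gamma})$. The paper inlines the argument syntactically: from $\Box_{L}\varphi\in x$ it derives $\Box_{L}\varphi\in w_{\Gamma}$ via 4L, 5L and TL, and then passes to $y$. You instead first package these axioms into symmetry and transitivity of $R_{L}^{c}$ and then apply the Euclidean step $w_{\Gamma}R_{L}^{c}x,\ w_{\Gamma}R_{L}^{c}y\Rightarrow xR_{L}^{c}y$. Your modular version also makes explicit the reflexivity needed for $w_{\Gamma}\in N^{cg}$, which the paper leaves implicit.
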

\begin{proof}
Since $N^{cg}=N^{c}\cap R_{L}^{c}(w_{\Gamma})$ and $R_{L}^{c}(w_{\Gamma})\subseteq N^{c}$,
it follows that $R_{L}^{c}(w_{\Gamma})=N^{cg}$. Thus, $R_{L}^{cg}(w_{\Gamma})=R_{L}^{c}(w_{\Gamma})$.

We show that for all $x,y\in R_{L}^{cg}(w_{\Gamma})$, $xR_{L}^{cg}y$
holds. Suppose $\Box_{L}\varphi\in x$; we must show $\varphi\in y$.
By axiom 4L, $\Box_{L}\Box_{L}\varphi\in x$. Suppose for contradiction
that $\Box_{L}\Box_{L}\varphi\notin w_{\Gamma}$. Then $\Diamond_{L}\Diamond_{L}\neg\varphi\in w_{\Gamma}$.
By axiom 5L, $\Box_{L}\Diamond_{L}\Diamond_{L}\neg\varphi\in w_{\Gamma}$,
which implies $\Diamond_{L}\Diamond_{L}\neg\varphi\in x$, a contradiction.
Therefore, $\Box_{L}\Box_{L}\varphi\in w_{\Gamma}$. By axiom TL,
$\Box_{L}\varphi\in w_{\Gamma}$, which implies $\varphi\in y$ since
$w_{\Gamma}R_{L}^{c}y$.
\end{proof}

\begin{lemma}[Truth Lemma] \label{lem:truth-lemma-1} For all $w\in W^{cg}$,
$M^{cg},w\Vdash\alpha\iff\alpha\in w$. \end{lemma}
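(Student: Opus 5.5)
We prove the Truth Lemma for $M^{cg}$ by following the Truth Lemma (Lemma \ref{lem:truth-lemma}) for $M^{c}$. The only changes are that $N^{cg}=R_{L}^{c}(w_{\Gamma})$ is smaller than $N^{c}$, and that MCSs of $N^{c}$ outside $R_{L}^{c}(w_{\Gamma})$ now lie in $I^{cg}$.

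For $w\in I^{cg}$, the claim is immediate: $M^{cg},w\Vdash\alpha$ iff $w\in V^{cg}(\alpha)=V^{c}(\alpha)$ iff $\alpha\in w$. This also covers the MCSs that were pushed into $I^{cg}$.

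For $w\in N^{cg}$, we argue by induction on $\alpha$. Atoms and Boolean connectives are routine, since $w$ is an MCS.

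\textbf{Case $\Box_{M}\varphi$ and $\Box_{L}\varphi$.} The direction from membership to truth is as before. For the converse we need a witness in $N^{cg}$. Take the witness $v$ given by the canonical argument, so $wR_{M}^{c}v$ (respectively $wR_{L}^{c}v$) and $\varphi\notin v$. We must check that $v\in R_{L}^{c}(w_{\Gamma})$. Since $w\in R_{L}^{c}(w_{\Gamma})$ and $R_{M}^{c}(w)\subseteq R_{L}^{c}(w)$ by (modal) (Lemma \ref{lem:c-constraints-scp}), we get $v\in R_{L}^{c}(w)$. Transitivity of $R_{L}^{c}$, which follows from 4L, then gives $v\in R_{L}^{c}(w_{\Gamma})$. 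The restricted relations therefore contain the pair $(w,v)$, and the induction hypothesis applies at $v$.

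\textbf{Case $\varphi>\psi$.} The direction from membership to truth is again immediate from the definition of $f^{cg}$ and the induction hypothesis; when the selected world lies in $I^{cg}$, the first paragraph applies. For the converse, suppose $\varphi>\psi\notin w$.

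If $\Diamond_{L}\varphi\notin w$, take $u=\{\chi\mid\varphi>\chi\in w\}$, exactly as in the earlier proof. Then $u\in W^{cg}=\mathcal{P}(\mathscr{L})$, $u\in f^{cg}(w,\varphi)$, and $\psi\notin u$, so $u\nVdash\psi$ whether $u$ is normal or not.

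If $\Diamond_{L}\varphi\in w$, the SIC rule shows that $\{\chi\mid\varphi>\chi\in w\}\cup\{\neg\psi\}$ is consistent. Extend it to an MCS $u$. The main obstacle is to show $u\in N^{cg}$, i.e.\ $w_{\Gamma}R_{L}^{c}u$, because otherwise $u\notin f^{cg}(w,\varphi)$. Here the new axiom SOC-L does the work. Let $\Box_{L}\chi\in w_{\Gamma}$. By Lemma \ref{lem:universal-RL}, or directly by 4L and the fact that $w\in R_{L}^{c}(w_{\Gamma})$, we have $\Box_{L}\chi\in w$. Together with $\Diamond_{L}\varphi\in w$, SOC-L yields $\varphi>\chi\in w$, hence $\chi\in u$. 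Thus $u\in R_{L}^{c}(w_{\Gamma})=N^{cg}$ and $u\in f^{cg}(w,\varphi)$. By the induction hypothesis $u\nVdash\psi$, so $w\nVdash\varphi>\psi$.
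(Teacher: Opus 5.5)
Your proof is correct and matches the paper's argument step for step. The $I^{cg}$ case is handled by the valuation. The modal witnesses are placed in $N^{cg}$ using $R_M^c\subseteq R_L^c$ and transitivity of $R_L^c$. For $\varphi>\psi$ with $\Diamond_L\varphi\in w$, you use the SIC rule for consistency and then 4L together with SOC-L to put the Lindenbaum extension inside $R_L^c(w_\Gamma)$.

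The only cosmetic difference is in the $\Box_L$ membership-to-truth direction, where you rely on $R_L^{cg}\subseteq R_L^c$ instead of Lemma~\ref{lem:universal-RL}; this is equally valid.
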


\begin{proof}

For any $w\in I^{cg}$ and any $\alpha\in\mathscr{L}$, the truth
conditions for anti-logical worlds give $M^{cg},w\Vdash\alpha\iff w\in V^{cg}(\alpha)\iff\alpha\in w$.

For any $w\in N^{cg}$, we proceed by induction on the complexity
of $\alpha$. The cases for atomic propositions and Boolean connectives
are standard.

Case $\alpha=\Box_{M}\varphi$:

$(\Leftarrow)$ Suppose $\Box_{M}\varphi\in w$. For all $x\in R_{M}^{cg}(w)$,
since $R_{M}^{cg}(w)\subseteq R_{M}^{c}(w)$, we have $\varphi\in x$.
By IH, $M^{cg},x\Vdash\varphi$. Thus $M^{cg},w\Vdash\Box_{M}\varphi$.

$(\Rightarrow)$ Suppose $\Box_{M}\varphi\notin w$. In the canonical
model, there exists an MCS $u\in N^{c}$ such that $wR_{M}^{c}u$
and $\neg\varphi\in u$. We must strictly show $u\in N^{cg}$. Since
$w\in N^{cg}=R_{L}^{c}(w_{\Gamma})$, we have $w_{\Gamma}R_{L}^{c}w$.
By the axiom $\Box_{L}\psi\to\Box_{M}\psi$, $R_{M}^{c}\subseteq R_{L}^{c}$,
so $wR_{L}^{c}u$. S5 logic guarantees $R_{L}^{c}$ is transitive,
hence $w_{\Gamma}R_{L}^{c}u$. This implies $u\in R_{L}^{c}(w_{\Gamma})$,
so $u\in N^{cg}$. Since $w,u\in N^{cg}$ and $wR_{M}^{c}u$, we have
$wR_{M}^{cg}u$. By IH, $M^{cg},u\nVdash\varphi$,
therefore $M^{cg},w\nVdash\Box_{M}\varphi$.

Case $\alpha=\Box_{L}\varphi$:

$(\Leftarrow)$ Suppose $\Box_{L}\varphi\in w$. By Lemma \ref{lem:universal-RL},
$R_{L}^{cg}$ is a universal relation on $N^{cg}$. For any $x\in N^{cg}$,
we have $wR_{L}^{cg}x$, which implies $\varphi\in x$. By the induction
hypothesis, $M^{cg},x\Vdash\varphi$, thus $M^{cg},w\Vdash\Box_{L}\varphi$.

$(\Rightarrow)$ Suppose $\Box_{L}\varphi\notin w$. There exists
an MCS $u\in N^{c}$ such that $wR_{L}^{c}u$ and $\neg\varphi\in u$.
Since $w\in N^{cg}$, $w_{\Gamma}R_{L}^{c}w$. By the transitivity
of $R_{L}^{c}$, $w_{\Gamma}R_{L}^{c}u$, which perfectly places $u\in N^{cg}$.
Thus $wR_{L}^{cg}u$. By IH, $M^{cg},u\nVdash\varphi$,
hence $M^{cg},w\nVdash\Box_{L}\varphi$.

Case $\alpha=\varphi>\psi$:

$(\Leftarrow)$ Suppose $\varphi>\psi\in w$. By the definition of
$f^{cg}$, for any $x\in f^{cg}(w,\varphi)$, $\psi\in x$. By the
induction hypothesis, $M^{cg},x\Vdash\psi$. Thus $M^{cg},w\Vdash\varphi>\psi$.

$(\Rightarrow)$ Suppose $\varphi>\psi\notin w$. 

If $\Diamond_{L}\varphi\notin w$: Construct $u=\{\chi\mid\varphi>\chi\in w\}$.
Since $\varphi>\psi\notin w$, $\psi\notin u$. By the definition
of $f^{cg}$, $u\in f^{cg}(w,\varphi)$. By IH,
$M^{cg},u\nVdash\psi$, hence $M^{cg},w\nVdash\varphi>\psi$. 

If $\Diamond_{L}\varphi\in w$: We must construct a world $u\in N^{cg}$
such that $u\in f^{cg}(w,\varphi)$ and $\psi\notin u$. We first
verify that the set $\Gamma'=\{\chi\mid\varphi>\chi\in w\}\cup\{\neg\psi\}$
is consistent. If not, $\vdash(\chi_{1}\land\dots\land\chi_{n})\to\psi$.
By the (SIC) rule, $\vdash\Diamond_{L}\varphi\to(((\varphi>\chi_{1})\land\dots\land(\varphi>\chi_{n}))\to(\varphi>\psi))$.
Since $\Diamond_{L}\varphi\in w$ and $\varphi>\chi_{i}\in w$, this
forces $\varphi>\psi\in w$, yielding a contradiction.

By Lindenbaum's Lemma, $\Gamma'$ can be extended to an MCS $u\in N^{c}$.
To prove $u\in N^{cg}$, we must show $u\in R_{L}^{c}(w_{\Gamma})$.
For any $\Box_{L}\gamma\in w_{\Gamma}$, by Axiom 4L, $\Box_{L}\Box_{L}\gamma\in w_{\Gamma}$.
Since $w\in R_{L}^{c}(w_{\Gamma})$, we have $\Box_{L}\gamma\in w$.
By the (SOC-L) axiom, $\Diamond_{L}\varphi\land\Box_{L}\gamma\to(\varphi>\gamma)$.
Since $\Diamond_{L}\varphi\in w$, it follows that $\varphi>\gamma\in w$,
meaning $\gamma\in\Gamma'\subseteq u$. Thus $w_{\Gamma}R_{L}^{c}u$,
which confirms $u\in N^{cg}$. Since $\{\chi\mid\varphi>\chi\in w\}\subseteq u$,
we have $u\in f^{cg}(w,\varphi)$. Because $\neg\psi\in u$, $\psi\notin u$.
By IH, $M^{cg},u\nVdash\psi$. Therefore, $M^{cg},w\nVdash\varphi>\psi$.
\end{proof}

Finally, we verify that $M^{cg}$ satisfies the specific constraints
of $\mathbf{SCP1}$.

\begin{lemma}\label{lem:cg-constraints-scp1} The canonical model
$M^{cg}$ satisfies the following model constraints for all $w\in N^{cg}$:
\[
\begin{array}{@{}ll@{\qquad}ll@{}}
\text{(rl)} & R_{L}^{cg}(w)=N^{cg} & \text{(id)} & f^{cg}(w,\varphi)\subseteq[[\varphi]]^{cg}\\
\text{(ne)} & f^{cg}(w,\varphi)\neq\varnothing & \text{(soc-m)} & \text{If }w\in[[\Diamond_{M}\varphi]],\ \text{then }f^{cg}(w,\varphi)\subseteq R_{M}^{cg}(w)\\
\text{(modal)} & R_{M}^{cg}(w)\subseteq R_{L}^{cg}(w) & \text{(sic)} & \text{If }w\in[[\Diamond_{L}\varphi]],\ \text{then }f^{cg}(w,\varphi)\subseteq N^{cg}
\end{array}
\]
\end{lemma}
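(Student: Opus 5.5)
The plan is to verify the six constraints one at a time, using the Truth Lemma (Lemma \ref{lem:truth-lemma-1}) to convert semantic hypotheses into membership facts and Lemma \ref{lem:universal-RL} for the structure of $R_{L}^{cg}$.

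\textbf{(rl) and (modal).} Constraint (rl) follows from Lemma \ref{lem:universal-RL}. That lemma shows $R_{L}^{cg}$ is universal on $N^{cg}=R_{L}^{c}(w_{\Gamma})$, so $R_{L}^{cg}(w)=N^{cg}$ for every $w\in N^{cg}$. Constraint (modal) is then immediate. By definition $R_{M}^{cg}$ relates only elements of $R_{L}^{c}(w_{\Gamma})=N^{cg}$, so $R_{M}^{cg}(w)\subseteq N^{cg}=R_{L}^{cg}(w)$.

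\textbf{(id) and (sic).} For (id), I use axiom ID: $\varphi>\varphi\in w$ for every MCS $w$. Hence every $x\in f^{cg}(w,\varphi)$ contains $\varphi$. By the Truth Lemma, which holds for all of $W^{cg}$ including $I^{cg}$, this gives $x\in[[\varphi]]^{cg}$. For (sic), if $w\in[[\Diamond_{L}\varphi]]^{cg}$ then the Truth Lemma gives $\Diamond_{L}\varphi\in w$, and the first clause of $f^{cg}$ gives $f^{cg}(w,\varphi)\subseteq N^{cg}$ directly.

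\textbf{(soc-m).} I copy the argument of Lemma \ref{lem:c-constraints-scp}, but it has to be checked within $N^{cg}$. Assume $\Diamond_{M}\varphi\in w$. By axiom TM, $\Diamond_{M}\varphi\to\Diamond_{L}\varphi$ is derivable (via MODAL), so $\Diamond_{L}\varphi\in w$ and $f^{cg}(w,\varphi)\subseteq N^{cg}$. Now take $u\in f^{cg}(w,\varphi)$ and any $\Box_{M}\psi\in w$. Axiom SOC-M gives $\varphi>\psi\in w$, hence $\psi\in u$, so $wR_{M}^{c}u$. Since $w,u\in N^{cg}$, we get $wR_{M}^{cg}u$.

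\textbf{(ne).} This is the step I expect to be the main obstacle. If $\Diamond_{L}\varphi\notin w$, the set $u=\{\chi\mid\varphi>\chi\in w\}$ lies in $W^{cg}=\mathcal{P}(\mathscr{L})$ and belongs to $f^{cg}(w,\varphi)$, so the selection is nonempty regardless of consistency. If $\Diamond_{L}\varphi\in w$, I need an MCS in $N^{cg}$ extending $\{\chi\mid\varphi>\chi\in w\}$. I would first show this set is consistent. If it were not, then $\vdash(\chi_{1}\wedge\dots\wedge\chi_{n})\to\neg\varphi$, and the SIC rule would give $\varphi>\neg\varphi\in w$, contradicting axiom NE together with $\Diamond_{L}\varphi\in w$. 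Next I extend the set by Lindenbaum to an MCS $u$. Finally, exactly as in the $(\Rightarrow)$ case of the conditional in the Truth Lemma, I use 4L and SOC-L to show $w_{\Gamma}R_{L}^{c}u$: every $\Box_{L}\gamma\in w_{\Gamma}$ yields $\Box_{L}\gamma\in w$, hence $\varphi>\gamma\in w$, hence $\gamma\in u$. So $u\in N^{cg}$ and $u\in f^{cg}(w,\varphi)$.
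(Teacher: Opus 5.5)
Your proof is correct. For (rl), (id), (sic) and (soc-m) it follows the paper's argument. Two small differences: for (modal) you argue via (rl), observing $R_{M}^{cg}(w)\subseteq N^{cg}=R_{L}^{cg}(w)$, whereas the paper restricts the canonical inclusion $R_{M}^{c}(w)\subseteq R_{L}^{c}(w)$; both are immediate. In (soc-m), $\Diamond_{M}\varphi\to\Diamond_{L}\varphi$ comes from MODAL alone by contraposition at $\neg\varphi$, and TM plays no role.

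The genuine difference is (ne) in the case $\Diamond_{L}\varphi\in w$. The paper builds no witness at all. It uses NE to get $\neg(\varphi>\neg\varphi)\in w$, and the Truth Lemma then gives $M^{cg},w\nVdash\varphi>\neg\varphi$. Hence $f^{cg}(w,\varphi)\not\subseteq[[\neg\varphi]]^{cg}$, so $f^{cg}(w,\varphi)\neq\varnothing$.

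Your construction re-does exactly the work the Truth Lemma already performed in its $(\Rightarrow)$ case for conditionals, with consequent $\neg\varphi$. That work is: consistency of $\{\chi\mid\varphi>\chi\in w\}$ via SIC, Lindenbaum, then 4L and SOC-L to place the extension in $N^{cg}$. So the step you flagged as the main obstacle is a one-liner once the Truth Lemma is available. What your version buys is self-containedness and a transparent view of NE's role: it is precisely what blocks the SIC-derived $\varphi>\neg\varphi$, and so guarantees that the selected set is consistent.
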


\begin{proof}







\textbf{(rl)}: Proven in Lemma \ref{lem:universal-RL}.

\textbf{(ne)}: If $\Diamond_{L}\varphi\notin w$, let $u=\{\psi\mid\varphi>\psi\in w\}$.
Since $W^{cg}=\wp(\mathcal{L})$, $u$ is a valid world in $W^{cg}$,
ensuring $f^{cg}(w,\varphi)\neq\varnothing$. If $\Diamond_{L}\varphi\in w$,
by axiom (NE), $\neg(\varphi>\neg\varphi)\in w$. By the Truth Lemma
on $M^{cg}$, $M^{cg},w\nVdash\varphi>\neg\varphi$, which strictly
implies $f^{cg}(w,\varphi)\nsubseteq[[\neg\varphi]]^{cg}$. Therefore,
$f^{cg}(w,\varphi)$ cannot be empty.

\textbf{(modal)}: By definition, $R_{M}^{cg}=R_{M}^{c}\cap(R_{L}^{c}(w_{\Gamma})\times R_{L}^{c}(w_{\Gamma}))$
and $R_{L}^{cg}=R_{L}^{c}\cap(R_{L}^{c}(w_{\Gamma})\times R_{L}^{c}(w_{\Gamma}))$.
Since $R_{M}^{c}(x)\subseteq R_{L}^{c}(x)$ holds for all $x\in N^{c}$
in the canonical model, it algebraically follows that $R_{M}^{cg}(w)\subseteq R_{L}^{cg}(w)$
for any $w\in N^{cg}$.

\textbf{(id)}: For any $u\in f^{cg}(w,\varphi)$, by the axiom (ID),
$\varphi>\varphi\in w$. By the definition of $f^{cg}$, this requires
$\varphi\in u$. By the Truth Lemma on $M^{cg}$, $u\in[[\varphi]]^{cg}$.
Thus $f^{cg}(w,\varphi)\subseteq[[\varphi]]^{cg}$.

\textbf{(soc-m)}: Suppose $w\in[[\Diamond_{M}\varphi]]^{cg}$. By
the Truth Lemma, $\Diamond_{M}\varphi\in w$. Since $\vdash\Diamond_{M}\varphi\to\Diamond_{L}\varphi$,
we have $\Diamond_{L}\varphi\in w$. Therefore, by the definition
of $f^{cg}$, $f^{cg}(w,\varphi)\subseteq N^{cg}$. Now, assume for
contradiction there exists $u\in f^{cg}(w,\varphi)$ such that $u\notin R_{M}^{cg}(w)$.
Since $u\in N^{cg}$, it must be that $u\notin R_{M}^{c}(w)$. By
the definition of $R_{M}^{c}$, there exists $\psi$ such that $\Box_{M}\psi\in w$
but $\psi\notin u$. By the (SOC-M) axiom, $\varphi>\psi\in w$, which
requires $\psi\in u$ according to the definition of $f^{cg}$, establishing
a contradiction. Thus $f^{cg}(w,\varphi)\subseteq R_{M}^{cg}(w)$.

\textbf{(sic)}: Suppose $w\in[[\Diamond_{L}\varphi]]^{cg}$. By the
Truth Lemma, $\Diamond_{L}\varphi\in w$. By the precise definition
of the selection function in $M^{cg}$ for this condition, $f^{cg}(w,\varphi)\subseteq N^{cg}$.
\end{proof}

\begin{theorem}[Completeness of $\mathbf{SCP1}$] If $\Gamma\vDash_{\mathbf{SCP1}}\varphi$,
then $\Gamma\vdash_{\mathbf{SCP1}}\varphi$. \end{theorem}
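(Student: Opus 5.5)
We argue by contraposition. Suppose $\Gamma\nvdash_{\mathbf{SCP1}}\varphi$. Then $\Gamma\cup\{\neg\varphi\}$ is $\mathbf{SCP1}$-consistent, and by Lindenbaum's Lemma it extends to an $\mathbf{SCP1}$-MCS. We take this MCS as $w_{\Gamma}$ and build the generated submodel $M^{cg}$ around it. Our aim is to show that $M^{cg}$ is an $\mathbf{SCP1}$ model, that $w_{\Gamma}\in N^{cg}$, and that $M^{cg},w_{\Gamma}\Vdash\Gamma$ while $M^{cg},w_{\Gamma}\nVdash\varphi$. This witnesses $\Gamma\nvDash_{\mathbf{SCP1}}\varphi$.

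First we check that $M^{cg}$ is a model in the sense of Definition~\ref{def:scp-model}.
\begin{itemize}
\item $W^{cg}=\mathcal{P}(\mathscr{L})$ is non-empty, and $N^{cg}\subseteq W^{cg}$.
\item $R_{M}^{cg}$ and $R_{L}^{cg}$ are equivalence relations on $N^{cg}$. They are restrictions of the canonical S5 relations $R_{M}^{c}$ and $R_{L}^{c}$ to the set $R_{L}^{c}(w_{\Gamma})\times R_{L}^{c}(w_{\Gamma})$, and restricting an equivalence relation to a subset preserves reflexivity, symmetry and transitivity. Reflexivity, symmetry and transitivity of $R^{c}_{M}$ and $R^{c}_{L}$ follow from TM/TL, 4M/4L and 5M/5L in the usual way.
\item $f^{cg}$ maps $N^{cg}\times\mathscr{L}$ into $\mathcal{P}(W^{cg})$.
\end{itemize}

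Next, $w_{\Gamma}\in N^{cg}$. Since $w_{\Gamma}$ is an MCS, it lies in $N^{c}$. By axiom TL, $R_{L}^{c}$ is reflexive, so $w_{\Gamma}\in R_{L}^{c}(w_{\Gamma})$.

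By Lemma~\ref{lem:cg-constraints-scp1}, $M^{cg}$ satisfies (modal), (id), (soc-m), (sic), (rl) and (ne). Hence $M^{cg}\in\mathsf{SCP1}$.

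Finally, we apply the Truth Lemma (Lemma~\ref{lem:truth-lemma-1}) at $w_{\Gamma}$. Every $\gamma\in\Gamma$ belongs to $w_{\Gamma}$, so $M^{cg},w_{\Gamma}\Vdash\gamma$. Since $\neg\varphi\in w_{\Gamma}$ and $w_{\Gamma}$ is consistent, $\varphi\notin w_{\Gamma}$, and therefore $M^{cg},w_{\Gamma}\nVdash\varphi$. Because validity quantifies over normal worlds and $w_{\Gamma}\in N^{cg}$, this contradicts $\Gamma\vDash_{\mathbf{SCP1}}\varphi$.

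The main point needing care is the truth lemma for the $>$ case with $\Diamond_{L}\varphi\in w$. There the witness MCS must be shown to lie inside $N^{cg}$, and this is where (SOC-L) is used. That step is already handled in Lemma~\ref{lem:truth-lemma-1}. What remains is the small check that $w_{\Gamma}\in N^{cg}$, which is immediate from reflexivity of $R_{L}^{c}$.
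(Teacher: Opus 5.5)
Your proposal is correct and follows the paper's route: extend $\Gamma\cup\{\neg\varphi\}$ to an $\mathbf{SCP1}$-MCS $w_{\Gamma}$, build the generated submodel $M^{cg}$ around it, use Lemma~\ref{lem:cg-constraints-scp1} to place $M^{cg}$ in $\mathsf{SCP1}$, and apply the Truth Lemma~\ref{lem:truth-lemma-1} at $w_{\Gamma}$. The two checks you add are exactly the bookkeeping the paper leaves implicit: that $w_{\Gamma}\in N^{cg}$ by reflexivity of $R_{L}^{c}$, and that the restricted relations remain equivalence relations on $N^{cg}$.
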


\subsection{Decidability}

We prove that the system remains decidable despite the introduction
of the non-emptiness constraint and the universal logical accessibility
relation.

The subformula closure $\Sigma_{\varphi}$ for $\mathbf{SCP1}$ follows
Definitions \ref{def:subformula-closure} and \ref{def:equivalence-relation},
except that the expansion $\Box_{M}\Box_{L}\psi$ is no longer required
for the $\mathbf{SCP1}$ closure set.

The filtration model for $\mathbf{SCP1}$ is constructed from a generated
submodel of the canonical model. While the overall approach is consistent
with standard methods, the definition of $R_{L}^{*}$ is modified
to accommodate the specific constraints of $\mathbf{SCP1}$.

\begin{definition}[Filtration Model] Given a generated submodel
$M^{cg}=(W^{cg},N^{cg},R_{M}^{cg},R_{L}^{cg},f^{cg},V^{cg})$, the
filtration model targeting $\Sigma_{\varphi}$ is defined as $M^{*}=(W^{*},N^{*},R_{M}^{*},R_{L}^{*},f^{*},V^{*})$,
where: \end{definition}
\begin{itemize}
\item $W^{*}=\{\overline{w}\mid w\in W^{cg}\}$; 
\item $N^{*}=\{\overline{w}\mid w\in N^{cg}\}$; 
\item For any $\overline{w},\overline{v}\in N^{*}$, $\overline{w}R_{M}^{*}\overline{v}\iff\forall\Box_{M}\psi\in\Sigma_{\varphi},(\Box_{M}\psi\in w\iff\Box_{M}\psi\in v)$; 
\item For any $\overline{w},\overline{v}\in N^{*}$, $\overline{w}R_{L}^{*}\overline{v}\iff\exists w'\in\overline{w},\exists v'\in\overline{v}$
such that $w'R_{L}^{cg}v'$; 
\item For any $\psi>\chi\in\Sigma_{\varphi}$, let $U_{\psi,w}=\{\overline{v}\mid\forall(\psi>\chi)\in\Sigma_{\varphi},(\psi>\chi\in w\Rightarrow\chi\in v)\}$.
Let 
\[
f^{*}(\overline{w},\psi)=\begin{cases}
U_{\psi,w}\cap R_{M}^{*}(\overline{w}) & \text{if }\Diamond_{M}\psi\in w\\
U_{\psi,w}\cap N^{*} & \text{if }\Diamond_{M}\psi\notin w\text{ and }\Diamond_{L}\psi\in w\\
U_{\psi,w}\cap W^{*} & \text{if }\Diamond_{L}\psi\notin w
\end{cases}
\]
\item For other $\psi$, let
\[
f^{*}(\overline{w},\psi)=\begin{cases}
[[\psi]]^{*}\cap R_{M}^{*}(\overline{w}) & \text{if }M^{*},\overline{w}\Vdash\Diamond_{M}\psi\\{}
[[\psi]]^{*}\cap N^{*} & \text{if }M^{*},\overline{w}\nVdash\Diamond_{M}\psi\text{ and }M^{*},\overline{w}\Vdash\Diamond_{L}\psi\\{}
[[\psi]]^{*} & \text{otherwise}
\end{cases},
\]
where $[[\psi]]^{*}:=\{w\in W^{*}\mid M^{*},w\Vdash\psi\}$;
\item $V^{*}(\psi)=\{\overline{w}\mid w\in V^{cg}(\psi)\}$. 
\end{itemize}

\begin{lemma} \label{lem:welldefinescp1} The model is well-defined.
For any $\overline{w},\overline{v}\in N^{*}$ and any $(\psi>\chi)\in\Sigma_{\varphi}$: 
\[
\begin{array}{@{}ll@{}}
1. \text{If } w\sim v,\text{ then } R_{M}^{*}(\overline{w})=R_{M}^{*}(\overline{v}); & 
2. R_{M}^{*} \text{ is an equivalence relation on } N^{*}; \\[4pt]
3. R_{L}^{*} \text{ is a universal relation on } N^{*}; & 
4. \text{If } w\sim v,\text{ then } R_{L}^{*}(\overline{w})=R_{L}^{*}(\overline{v}); \\[4pt]
5. \text{If } w\sim v,\text{ then } f^{*}(\overline{w},\psi)=f^{*}(\overline{v},\psi). & 
\end{array}
\]
\end{lemma}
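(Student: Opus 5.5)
Items 1, 2 and 5 go as in Lemma \ref{lem:welldefinescp}, since $R_{M}^{*}$ and the first clause of $f^{*}$ are unchanged. Item 3 is new, and item 4 then follows from it at once.

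For items 1 and 2: if $w\sim v$, then $w$ and $v$ agree on every $\Box_{M}\psi\in\Sigma_{\varphi}$. So the condition defining $\overline{u}\in R_{M}^{*}(\overline{w})$ is literally the same as the one defining $\overline{u}\in R_{M}^{*}(\overline{v})$. Reflexivity, symmetry and transitivity of $R_{M}^{*}$ are immediate because it is defined by a biconditional on membership of the $\Box_{M}$-formulas of $\Sigma_{\varphi}$. The definition of $R_{M}^{*}$ is also independent of the representative chosen, since equivalent worlds agree on all of $\Sigma_{\varphi}$.

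For item 3, take $\overline{w},\overline{v}\in N^{*}$. By definition of $N^{*}$ there are representatives $w'\in\overline{w}\cap N^{cg}$ and $v'\in\overline{v}\cap N^{cg}$. Lemma \ref{lem:universal-RL} says $R_{L}^{cg}$ is universal on $N^{cg}$, so $w'R_{L}^{cg}v'$, and hence $\overline{w}R_{L}^{*}\overline{v}$. Therefore $R_{L}^{*}=N^{*}\times N^{*}$. Item 4 follows because $R_{L}^{*}(\overline{w})=N^{*}=R_{L}^{*}(\overline{v})$. Note also that the existential definition of $R_{L}^{*}$ is phrased on classes, so it does not depend on representatives.

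For item 5, I split according to whether $\psi$ occurs as the antecedent of some conditional in $\Sigma_{\varphi}$.

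In the first case there is some $\psi>\chi\in\Sigma_{\varphi}$. Then $U_{\psi,w}$ depends only on which formulas $\psi>\chi\in\Sigma_{\varphi}$ belong to $w$, so $w\sim v$ gives $U_{\psi,w}=U_{\psi,v}$. The closure conditions put $\Diamond_{M}\psi$ and $\Diamond_{L}\psi$ in $\Sigma_{\varphi}$, so $w$ and $v$ fall into the same clause of the case split. In the first clause, item 1 gives $R_{M}^{*}(\overline{w})=R_{M}^{*}(\overline{v})$, and the two values of $f^{*}$ coincide.

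In the second case $\psi$ is not such an antecedent. Then $f^{*}$ is defined via $[[\psi]]^{*}$ and the truth of $\Diamond_{M}\psi$ and $\Diamond_{L}\psi$ at $\overline{w}$ in $M^{*}$. These are properties of the class $\overline{w}$ itself, not of a representative, together with $R_{M}^{*}(\overline{w})$, which is already well defined by item 1. The apparent circularity, with $M^{*}$ used inside the definition of $f^{*}$, is resolved by recursion on formulas. The truth of $\psi$, $\Diamond_{M}\psi$ and $\Diamond_{L}\psi$ at normal points depends only on $V^{*}$, $R_{M}^{*}$, $R_{L}^{*}$ and $f^{*}$ at subformulas of $\psi$. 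At anti-logical points it depends only on $V^{*}$. So $f^{*}(\cdot,\psi)$ is defined by induction on $\psi$, and equal classes yield equal values.

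The main obstacle is this last point: checking that the clause ``for other $\psi$'' is a legitimate inductive definition and not circular, since it refers to $[[\psi]]^{*}$ in the model being defined. Everything else is bookkeeping.
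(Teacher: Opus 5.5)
Your proof is correct and follows essentially the same route as the paper. Items 1--2 go through because representatives agree on the $\Box_{M}$-formulas of $\Sigma_{\varphi}$. Item 3 follows by picking representatives in $N^{cg}$ and using the universality of $R_{L}^{cg}$, and item 4 is an immediate consequence. Item 5 holds because $\Diamond_{M}\psi,\Diamond_{L}\psi\in\Sigma_{\varphi}$ force $w$ and $v$ into the same branch with $U_{\psi,w}=U_{\psi,v}$.

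You also check that the ``other $\psi$'' clause of $f^{*}$ is a legitimate recursion on formula complexity. The paper's proof does not address this, and the lemma as stated only concerns antecedents in $\Sigma_{\varphi}$, but your argument is sound.
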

\begin{proof}




We only need to prove Items 3, 4, and 5. Items 1 and 2 follow analogously
to Lemma \ref{lem:welldefinescp}, as the logical structure of $R_{M}^{*}$
relies on $\Box_{M}$-formulas which behave identically in the closure
$\Sigma_{\varphi}$ of both systems.

Item 3: Suppose $\overline{w},\overline{v}\in N^{*}$. By the definition
of $N^{*}$, there exist specific representatives $w'\in\overline{w}$
and $v'\in\overline{v}$ such that $w',v'\in N^{cg}$. As proven in
Lemma \ref{lem:universal-RL}, $R_{L}^{cg}$ is a universal relation
on $N^{cg}$, meaning $w'R_{L}^{cg}v'$ inherently holds. According
to the existential definition of $R_{L}^{*}$ in the filtration model,
the presence of such $w'$ and $v'$ directly yields $\overline{w}R_{L}^{*}\overline{v}$.
Thus, $R_{L}^{*}$ is strictly a universal relation on $N^{*}$.

Item 4: Since $R_{L}^{*}$ is a universal relation on $N^{*}$ (as
established in Point 4), the relational image of any element in $N^{*}$
is the entire set $N^{*}$. Therefore, if $w\sim v$, both $\overline{w}$
and $\overline{v}$ designate elements in $N^{*}$, trivially yielding
$R_{L}^{*}(\overline{w})=N^{*}=R_{L}^{*}(\overline{v})$.

Item 5: Suppose $w\sim v$. For any $(\psi>\chi)\in\Sigma_{\varphi}$,
since $w\sim v$, we have $\psi>\chi\in w\iff\psi>\chi\in v$. This
immediately ensures that the base sets are identical: $U_{\psi,w}=U_{\psi,v}$.
Crucially, by the definition of the subformula closure $\Sigma_{\varphi}$,
the presence of $(\psi>\chi)\in\Sigma_{\varphi}$ guarantees that
both modal preconditions $\Diamond_{M}\psi\in\Sigma_{\varphi}$ and
$\Diamond_{L}\psi\in\Sigma_{\varphi}$ hold. Since $w\sim v$, $w$
and $v$ must categorically agree on the presence of $\Diamond_{M}\psi$
and $\Diamond_{L}\psi$. This logical synchronization guarantees that
$w$ and $v$ trigger the exact same conditional branch in the piecewise
definition of $f^{*}$. Because $U_{\psi,w}=U_{\psi,v}$, $R_{M}^{*}(\overline{w})=R_{M}^{*}(\overline{v})$
(from Point 1), and $N^{*}$ as well as $W^{*}$ are model constants,
the subsequent intersection operations across any of the three branches
will mathematically yield identical results. Therefore, $f^{*}(\overline{w},\psi)=f^{*}(\overline{v},\psi)$.
\end{proof}
\begin{lemma}[Filtration Lemma] \label{lem:filtration-lemma-scp1}
For all $\psi\in\Sigma_{\varphi}$ and all $w\in W^{cg}$: 
\[
M^{cg},w\Vdash\psi\iff M^{*},\overline{w}\Vdash\psi
\]
\end{lemma}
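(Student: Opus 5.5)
The plan is to argue exactly as in Lemma \ref{lem:filtration-lemma}: induction on the complexity of $\psi\in\Sigma_{\varphi}$. Throughout I use the Truth Lemma for $M^{cg}$ (Lemma \ref{lem:truth-lemma-1}), so $M^{cg},w\Vdash\psi$ iff $\psi\in w$.

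\textbf{Anti-logical classes.} First dispose of $\overline{w}\notin N^{*}$ by the valuation clause, as in Lemma \ref{lem:filtration-lemma}. Here $V^{*}(\psi)=\{\overline{w}\mid\psi\in w\}$ is well defined for $\psi\in\Sigma_{\varphi}$ because $\sim$ respects membership of $\Sigma_{\varphi}$-formulas.

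For $\overline{w}\in N^{*}$, the class $\overline{w}$ may mix MCSs and non-MCSs of $W^{cg}$. I will therefore fix a representative $w'\in\overline{w}\cap N^{cg}$ and use that $w\sim w'$ throughout. Atoms and Boolean cases are immediate from the IH and maximality of $w'$.

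\textbf{The $\Box_{M}$ case.} This copies the earlier proof verbatim. For the $(\Leftarrow)$ direction, the witness $u$ with $w R_{M}^{c} u$ lies in $N^{cg}$ by the argument in the Truth Lemma for $M^{cg}$.

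\textbf{The $\Box_{L}$ case.} This is where the new definition of $R_{L}^{*}$ matters. By Lemma \ref{lem:welldefinescp1}.3, $R_{L}^{*}(\overline{w})=N^{*}$, and by Lemma \ref{lem:universal-RL}, $R_{L}^{cg}(w')=N^{cg}$.
\begin{itemize}
\item[$(\Rightarrow)$] If $\Box_{L}\gamma\in w$, then $\gamma\in v$ for every $v\in N^{cg}$. Hence every $\overline{v}\in N^{*}$ has a representative in $N^{cg}$ containing $\gamma$. Since $\gamma\in\Sigma_{\varphi}$, membership of $\gamma$ is constant on $\overline{v}$, and the IH gives $M^{*},\overline{v}\Vdash\gamma$.
\item[$(\Leftarrow)$] If $\Box_{L}\gamma\notin w$, the Truth Lemma supplies $u\in N^{cg}$ with $\gamma\notin u$. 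Then $\overline{u}\in N^{*}=R_{L}^{*}(\overline{w})$, and the IH refutes $\gamma$ at $\overline{u}$.
\end{itemize}
This is why the $\Box_{M}\Box_{L}\psi$ expansion can be dropped from the closure.

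\textbf{The $\alpha>\beta$ case.} Here $\alpha>\beta\in\Sigma_{\varphi}$, so $f^{*}$ is given by the $U$-branches, and $\Diamond_{M}\alpha,\Diamond_{L}\alpha\in\Sigma_{\varphi}$.
\begin{itemize}
\item[$(\Rightarrow)$] This is exactly as before: membership in $U_{\alpha,w}$ forces $\beta\in v$, and the IH applies.
\item[$(\Leftarrow)$] Take $u\in f^{cg}(w,\alpha)$ with $\beta\notin u$, given by the Truth Lemma. Then $\overline{u}\in U_{\alpha,w}$. Split into three cases.
\begin{enumerate}
\item If $\Diamond_{L}\alpha\notin w$, we are done, since the branch is just $U_{\alpha,w}\cap W^{*}$.
\item If $\Diamond_{L}\alpha\in w$ but $\Diamond_{M}\alpha\notin w$, then $u\in N^{cg}$ by the definition of $f^{cg}$, so $\overline{u}\in N^{*}$.
\item If $\Diamond_{M}\alpha\in w$, I must show $\overline{w}R_{M}^{*}\overline{u}$, i.e.\ agreement on every $\Box_{M}\chi\in\Sigma_{\varphi}$. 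Use (SOC-M) together with 4M: $\Box_{M}\chi\in w$ gives $\Box_{M}\Box_{M}\chi\in w$, hence $\alpha>\Box_{M}\chi\in w$, hence $\Box_{M}\chi\in u$. Conversely, $\neg\Box_{M}\chi\in w$ gives $\Box_{M}\neg\Box_{M}\chi\in w$ by 5M, hence $\neg\Box_{M}\chi\in u$.
\end{enumerate}
\end{itemize}
Conditionals outside $\Sigma_{\varphi}$ need no treatment, since the lemma only concerns $\psi\in\Sigma_{\varphi}$.

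\textbf{Main obstacle.} The delicate point is that $\alpha>\Box_{M}\chi$ need not lie in $\Sigma_{\varphi}$. This is harmless, because the argument uses only the canonical set $u$, which contains all consequents. The real care goes into the mixed equivalence classes: each time I read off membership in $w$, $u$ or $v$, I must pass through an MCS representative and invoke $\sim$ only on $\Sigma_{\varphi}$-formulas.
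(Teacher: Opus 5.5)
Your proof is correct and is essentially the paper's: induction over $\Sigma_{\varphi}$ via the Truth Lemma for $M^{cg}$, universality of $R_{L}^{cg}$ and $R_{L}^{*}$ for the $\Box_{L}$ case, and the three-way split on $\Diamond_{M}\alpha,\Diamond_{L}\alpha$ for conditionals, where your SOC-M/4M/5M argument in the $\Diamond_{M}$ branch just unfolds the paper's appeal to the (soc-m) clause of Lemma~\ref{lem:cg-constraints-scp1} together with $R_{M}^{cg}$ being an equivalence relation (and passing explicitly through an MCS representative of $\overline{w}$ is care the paper omits). One line is missing in that $\Diamond_{M}$ branch: $\Diamond_{M}\alpha\in w$ yields $\Diamond_{L}\alpha\in w$ by MODAL, so $u\in N^{cg}$ by the definition of $f^{cg}$; you need this both for $\overline{u}\in N^{*}$ (the domain of $R_{M}^{*}$) and to pass from $\neg\Box_{M}\chi\in u$ to $\Box_{M}\chi\notin u$ --- the paper gets it for free because $u\in R_{M}^{cg}(w)\subseteq N^{cg}$.
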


\begin{proof}

For any $\overline{w}\notin N^{*}$ and $\psi\in\Sigma_{\varphi}$,
by truth conditions, $M^{cg},w\Vdash\psi\iff w\in V^{cg}(\psi)\iff\overline{w}\in V^{*}(\psi)\iff M^{*},\overline{w}\Vdash\psi$.

For $\overline{w}\in N^{*}$, we proceed by induction on the complexity
of $\psi$. We only prove the cases $\psi=\Box_{L}\gamma$ and $\psi=\alpha>\beta$;
other cases refer to Lemma \ref{lem:filtration-lemma}.

Case $\psi=\Box_{L}\gamma$:

($\Rightarrow$) Suppose $M^{cg},w\Vdash\Box_{L}\gamma$. Since $R_{L}^{*}$
is a universal relation on $N^{*}$ (by Lemma \ref{lem:welldefinescp1}),
we must show that for any $\overline{v}\in N^{*}$, $M^{*},\overline{v}\Vdash\gamma$.
Because $\Box_{L}\gamma\in w$ and $R_{L}^{cg}$ is a universal relation
on $N^{cg}$, for any $v\in N^{cg}$, $\gamma\in v$. Given $N^{*}=\{\overline{w}\mid w\in N^{cg}\}$
and IH, for any $\overline{v}\in N^{*}$, $M^{*},\overline{v}\Vdash\gamma$.

($\Leftarrow$) Suppose $M^{cg},w\nVdash\Box_{L}\gamma$. Then there
exists $v\in N^{cg}$ such that $v\nVdash\gamma$. Since $\overline{v}\in N^{*}$
and $R_{L}^{*}$ is a universal relation, $\overline{w}R_{L}^{*}\overline{v}$
holds. By IH, $M^{*},\overline{v}\nVdash\gamma$.
Thus, $M^{*},\overline{w}\nVdash\Box_{L}\gamma$.

Case $\psi=\alpha>\beta$:

$(\Rightarrow)$ Suppose $M^{cg},w\Vdash\alpha>\beta$. By the Truth
Lemma on $M^{cg}$, $\alpha>\beta\in w$. We need to show that for
any $\overline{v}\in f^{*}(\overline{w},\alpha)$, $M^{*},\overline{v}\Vdash\beta$.
By the definition of $f^{*}$ in the filtration model, regardless
of whether $\Diamond_{M}\alpha$ or $\Diamond_{L}\alpha$ belongs
to $w$, it must hold that $\overline{v}\in U_{\alpha,w}$. By the
definition of $U_{\alpha,w}$, for all $(\alpha>\chi)\in\Sigma_{\varphi}$,
if $\alpha>\chi\in w$, then $\chi\in v$. Since $\alpha>\beta\in\Sigma_{\varphi}$
and $\alpha>\beta\in w$, we necessarily have $\beta\in v$. By the
induction hypothesis, $M^{*},\overline{v}\Vdash\beta$. Thus, $f^{*}(\overline{w},\alpha)\subseteq[[\beta]]^{*}$,
which implies $M^{*},\overline{w}\Vdash\alpha>\beta$.

$(\Leftarrow)$ Suppose $M^{cg},w\nVdash\alpha>\beta$. By the Truth
Lemma on $M^{cg}$, $\alpha>\beta\notin w$. Based on the definition
of the selection function $f^{cg}$ in the generated submodel $M^{cg}$,
there exists a world $u\in f^{cg}(w,\alpha)$ such that $\beta\notin u$.
We consider its equivalence class $\overline{u}$ and prove $\overline{u}\in f^{*}(\overline{w},\alpha)$
by examining three cases according to the definition of $f^{*}$:

If $\Diamond_{M}\alpha\in w$: According to the (soc-m) constraint
in Lemma \ref{lem:cg-constraints-scp1}, $f^{cg}(w,\alpha)\subseteq R_{M}^{cg}(w)$.
Since $u\in f^{cg}(w,\alpha)$, it follows that $u\in R_{M}^{cg}(w)$.
Because $R_{M}^{cg}$ is an equivalence relation on $N^{cg}$, $w$
and $u$ agree on all $\Box_{M}$-formulas. Therefore, for any $\Box_{M}\gamma\in\Sigma_{\varphi}$,
$\Box_{M}\gamma\in w\iff\Box_{M}\gamma\in u$. By definition, this
means $\overline{w}R_{M}^{*}\overline{u}$. Meanwhile, since $\chi\in u$
holds for all $\alpha>\chi\in w$, we have $\overline{u}\in U_{\alpha,w}$.
Consequently, $\overline{u}\in U_{\alpha,w}\cap R_{M}^{*}(\overline{w})=f^{*}(\overline{w},\alpha)$.

If $\Diamond_{M}\alpha\notin w$ and $\Diamond_{L}\alpha\in w$: According
to the properties of $f^{cg}$ and the (sic) constraint in Lemma \ref{lem:cg-constraints-scp1},
$f^{cg}(w,\alpha)\subseteq N^{cg}$. Since $u\in f^{cg}(w,\alpha)$,
we immediately obtain $u\in N^{cg}$. By the definition of $N^{*}$,
it follows that $\overline{u}\in N^{*}$. Combined with $\overline{u}\in U_{\alpha,w}$,
we get $\overline{u}\in U_{\alpha,w}\cap N^{*}=f^{*}(\overline{w},\alpha)$.

If $\Diamond_{L}\alpha\notin w$: Since $u\in W^{cg}$, we have $\overline{u}\in W^{*}$.
Combined with $\overline{u}\in U_{\alpha,w}$, we get $\overline{u}\in U_{\alpha,w}\cap W^{*}=f^{*}(\overline{w},\alpha)$.

In all cases, we have established $\overline{u}\in f^{*}(\overline{w},\alpha)$.
Because $\beta\notin u$, IH yields $M^{*},\overline{u}\nVdash\beta$.
Therefore, $M^{*},\overline{w}\nVdash\alpha>\beta$.
\end{proof}

\begin{lemma} \label{lem:filter-constraints-scp1} The filtration
model satisfies the model constraints; that is, for all $\psi>\chi\in\Sigma_{\varphi}$
and $\overline{w}\in N^{*}$: 
\[
\begin{array}{@{}ll@{\qquad}ll@{}}
\text{(rl)} & R_{L}^{*}(\overline{w})=N^{*} & \text{(id)} & f^{*}(\overline{w},\psi)\subseteq[[\psi]]^{*}\\
\text{(ne)} & f^{*}(\overline{w},\psi)\neq\varnothing & \text{(soc-m)} & \text{If }M^{*},\overline{w}\Vdash\Diamond_{M}\psi,\ \text{then }f^{*}(\overline{w},\psi)\subseteq R_{M}^{*}(\overline{w})\\
\text{(modal)} & R_{M}^{*}(\overline{w})\subseteq R_{L}^{*}(\overline{w}) & \text{(sic)} & \text{If }M^{*},\overline{w}\Vdash\Diamond_{L}\psi,\ \text{then }f^{*}(\overline{w},\psi)\subseteq N^{*}
\end{array}
\]
\end{lemma}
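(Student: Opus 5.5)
I will verify the six constraints one at a time, relying on the Filtration Lemma (Lemma \ref{lem:filtration-lemma-scp1}), the well-definedness Lemma \ref{lem:welldefinescp1}, and the corresponding constraints already established for $M^{cg}$ in Lemma \ref{lem:cg-constraints-scp1}. Throughout, for $\overline{w}\in N^{*}$ I fix a representative $w\in N^{cg}$; by Lemma \ref{lem:welldefinescp1}.5 the value of $f^{*}$ does not depend on this choice.

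\textbf{(rl)} and \textbf{(modal)} are immediate. Item 3 of Lemma \ref{lem:welldefinescp1} gives $R_{L}^{*}(\overline{w})=N^{*}$, and $R_{M}^{*}$ is by definition a relation on $N^{*}$, so $R_{M}^{*}(\overline{w})\subseteq N^{*}=R_{L}^{*}(\overline{w})$. This is why the $\Box_{M}\Box_{L}\psi$ expansion is no longer needed. \textbf{(soc-m)} and \textbf{(sic)} go exactly as in Lemma \ref{lem:filter-constraints}. Since $\psi>\chi\in\Sigma_{\varphi}$, both $\Diamond_{M}\psi$ and $\Diamond_{L}\psi$ lie in $\Sigma_{\varphi}$, so the Filtration Lemma converts $M^{*},\overline{w}\Vdash\Diamond_{M}\psi$ into $\Diamond_{M}\psi\in w$, and the first branch of $f^{*}$ gives the inclusion. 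For (sic), either $\Diamond_{M}\psi\in w$, and then $f^{*}\subseteq R_{M}^{*}(\overline{w})\subseteq N^{*}$, or we are in the second branch. For \textbf{(id)}: $\psi>\psi\in\Sigma_{\varphi}$ and $\psi>\psi\in w$ by (ID), so every $\overline{v}\in U_{\psi,w}$ has $\psi\in v$. The Truth Lemma \ref{lem:truth-lemma-1} together with the Filtration Lemma then gives $\overline{v}\in[[\psi]]^{*}$.

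\textbf{(ne)} is the main point. I will show that $f^{cg}(w,\psi)\neq\varnothing$ transfers, using Lemma \ref{lem:cg-constraints-scp1}. Pick $u\in f^{cg}(w,\psi)$. Then $\{\chi\mid\psi>\chi\in w\}\subseteq u$, so $\overline{u}\in U_{\psi,w}$. Next I would reuse the case analysis from the ($\Leftarrow$) direction of the Filtration Lemma to place $\overline{u}$ in the right branch. If $\Diamond_{M}\psi\in w$, then (soc-m) for $M^{cg}$ gives $u\in R_{M}^{cg}(w)$, so $u$ agrees with $w$ on $\Box_{M}$-formulas and $\overline{w}R_{M}^{*}\overline{u}$. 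If $\Diamond_{L}\psi\in w$ only, then (sic) gives $u\in N^{cg}$, so $\overline{u}\in N^{*}$. Otherwise $\overline{u}\in W^{*}$ trivially. In every case $\overline{u}\in f^{*}(\overline{w},\psi)$.

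The obstacle is that the statement quantifies only over $\psi>\chi\in\Sigma_{\varphi}$. Still, the model must satisfy (ne) for all antecedents to belong to $\mathsf{SCP1}$, and for other $\psi$ the fallback $f^{*}$ is a truth set, which could be empty. I would at least check this remaining case. In the third branch one can note that $[[\psi]]^{*}$ is nonempty whenever some world contains $\psi$, for instance the class of the set $\{\psi\}\in I^{cg}$. One still has to make sure that this class does not collapse into $N^{*}$; if it does, the class is a normal world satisfying $\psi$ anyway. For the first two branches, $\Diamond_{M}\psi$ or $\Diamond_{L}\psi$ holding at $\overline{w}$ directly supplies a witness in $R_{M}^{*}(\overline{w})$ or in $N^{*}$ that satisfies $\psi$.
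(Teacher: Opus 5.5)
Your proposal is correct and follows the paper's proof essentially step for step. You get (rl) and (modal) from the universality of $R_{L}^{*}$, and (soc-m) and (sic) from the branch structure of $f^{*}$ via the Filtration Lemma. You get (id) from $\psi>\psi\in\Sigma_{\varphi}$, and (ne) by transferring a witness $u\in f^{cg}(w,\psi)$ through the same three-case analysis. Your extra treatment of non-antecedent $\psi$ goes beyond the paper's one-line ``analogously'', and the collapse worry never actually arises in the third branch. If $\psi\in\Sigma_{\varphi}$, a collapsed class would be a normal world satisfying $\psi$, contradicting $M^{*},\overline{w}\nVdash\Diamond_{L}\psi$ since $R_{L}^{*}$ is universal. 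If $\psi\notin\Sigma_{\varphi}$, the class of $\{\psi\}$ has empty $\Sigma_{\varphi}$-profile, which no MCS has.
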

\begin{proof}

\textbf{(rl)}: As proved in Lemma \ref{lem:welldefinescp1}, $R_{L}^{*}(\overline{w})=N^{*}$.

\textbf{(id)}: For any of the three cases in the definition of $f^{*}$,
it algebraically holds that $f^{*}(\overline{w},\psi)\subseteq U_{\psi,w}$.
Since $\psi>\psi\in\Sigma_{\varphi}$, for any $\overline{u}\in U_{\psi,w}$,
the construction condition implies $\psi\in u$. By the Filtration
Lemma, $M^{*},\overline{u}\Vdash\psi$, meaning $U_{\psi,w}\subseteq[[\psi]]^{*}$.
Therefore, $f^{*}(\overline{w},\psi)\subseteq[[\psi]]^{*}$.

\textbf{(ne)}: Since $f^{cg}(w,\psi)\neq\varnothing$, there exists
$v\in f^{cg}(w,\psi)$. By the definition of the canonical selection
function, for every $(\psi>\chi)\in\Sigma_{\varphi}$, we have $\psi>\chi\in w\implies\chi\in v$,
hence $\overline{v}\in U_{\psi,w}$. We then consider three cases:

If $\Diamond_{M}\psi\in w$, then $f^{cg}(w,\psi)\subseteq R_{M}^{cg}(w)$,
so $v\in R_{M}^{cg}(w)$. Since $R_{M}^{cg}$ is an equivalence relation
on $N^{cg}$, $w$ and $v$ agree on all $\Box_{M}$-formulas in $\Sigma_{\varphi}$,
hence $\overline{v}\in R_{M}^{*}(\overline{w})$. Therefore $\overline{v}\in U_{\psi,w}\cap R_{M}^{*}(\overline{w})=f^{*}(\overline{w},\psi)$.

If $\Diamond_{M}\psi\notin w$ and $\Diamond_{L}\psi\in w$, then
$f^{cg}(w,\psi)\subseteq N^{cg}$. Thus $v\in N^{cg}$, and by definition
$\overline{v}\in N^{*}$. Therefore $\overline{v}\in U_{\psi,w}\cap N^{*}=f^{*}(\overline{w},\psi)$.

If $\Diamond_{L}\psi\notin w$, then $f^{*}(\overline{w},\psi)=U_{\psi,w}\cap W^{*}$.
Since $\overline{v}\in W^{*}$, $\overline{v}\in U_{\psi,w}\cap W^{*}=f^{*}(\overline{w},\psi)$.In
all cases, we find $\overline{v}\in f^{*}(\overline{w},\psi)$, meaning
$f^{*}(\overline{w},\psi)\neq\varnothing$.

\textbf{(soc-m)}: Suppose $M^{*},\overline{w}\Vdash\Diamond_{M}\psi$.
By the Filtration Lemma, $\Diamond_{M}\psi\in w$. According to the
first case of the definition of $f^{*}$, $f^{*}(\overline{w},\psi)=U_{\psi,w}\cap R_{M}^{*}(\overline{w})$.
It directly follows that $f^{*}(\overline{w},\psi)\subseteq R_{M}^{*}(\overline{w})$.

\textbf{(modal)}: Since $R_{M}^{*}(\overline{w})$ is defined as an
equivalence relation on $N^{*}$, we naturally have $R_{M}^{*}(\overline{w})\subseteq N^{*}$.
As proven in Lemma \ref{lem:welldefinescp1}, $R_{L}^{*}(\overline{w})=N^{*}$,
thus $R_{M}^{*}(\overline{w})\subseteq R_{L}^{*}(\overline{w})$.

\textbf{(sic)}: Suppose $M^{*},\overline{w}\Vdash\Diamond_{L}\psi$.
By the Filtration Lemma, $\Diamond_{L}\psi\in w$. We consider the
two applicable cases in the definition of $f^{*}$:If $\Diamond_{M}\psi\in w$,
then $f^{*}(\overline{w},\psi)=U_{\psi,w}\cap R_{M}^{*}(\overline{w})$.
Since $R_{M}^{*}(\overline{w})\subseteq N^{*}$, we immediately have
$f^{*}(\overline{w},\psi)\subseteq N^{*}$.If $\Diamond_{M}\psi\notin w$,
then $f^{*}(\overline{w},\psi)=U_{\psi,w}\cap N^{*}$, which obviously
implies $f^{*}(\overline{w},\psi)\subseteq N^{*}$. In both scenarios,
the constraint is perfectly satisfied.
The case when $\psi$ is not an antecedent of a conditional in $\Sigma_\varphi$ can be proved analogously.
\end{proof}

\begin{theorem} $\mathbf{SCP1}$ has finite model property and is decidable. \end{theorem}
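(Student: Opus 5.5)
The plan is the usual filtration argument. Suppose $\nvdash_{\mathbf{SCP1}}\varphi$. Then $\{\neg\varphi\}$ is $\mathbf{SCP1}$-consistent, so by Lindenbaum's Lemma it extends to an MCS $w_{\Gamma}$, and we build the generated submodel $M^{cg}$ around $w_{\Gamma}$. By Lemma \ref{lem:truth-lemma-1}, $M^{cg},w_{\Gamma}\nVdash\varphi$, and by Lemma \ref{lem:cg-constraints-scp1}, $M^{cg}$ is an $\mathbf{SCP1}$ model.

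Next we filtrate $M^{cg}$ through the closure $\Sigma_{\varphi}$, i.e.\ the $\mathbf{SCP1}$ version of Definition \ref{def:subformula-closure}. Since $\neg\varphi\in w_{\Gamma}$ and $\neg\varphi\in\Sigma_{\varphi}$ (up to the usual treatment of negated formulas), the Filtration Lemma \ref{lem:filtration-lemma-scp1} gives $M^{*},\overline{w_{\Gamma}}\nVdash\varphi$. Here $\overline{w_{\Gamma}}\in N^{*}$, as validity requires. By Lemmas \ref{lem:welldefinescp1} and \ref{lem:filter-constraints-scp1}, $M^{*}$ is a well-defined $\mathbf{SCP1}$ model.

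For the finite model property, it remains to check finiteness. Worlds of $M^{*}$ are $\sim$-classes, and a class is determined by which members of $\Sigma_{\varphi}$ it contains. So $|W^{*}|\le 2^{|\Sigma_{\varphi}|}$, which is finite because each closure clause adds only boundedly many formulas per member and none of them re-triggers the $\Box_{M}\Box_{L}$ clause. Hence every non-theorem is refuted in an $\mathbf{SCP1}$ model of size at most $2^{|\Sigma_{\varphi}|}$.

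For decidability, combine this with the effective enumerability of theorems: the axioms are decidable and the rules, including (SIC), are effective. In parallel, enumerate all finite $\mathbf{SCP1}$ models up to the size bound and check whether any refutes $\varphi$.

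The main obstacle is this enumeration, because $f$ and $V$ are defined on the whole infinite language. We handle it by noting that only the values of $V$ on $\Sigma_{\varphi}$ and of $f(\cdot,\psi)$ for antecedents $\psi$ occurring in $\Sigma_{\varphi}$ affect the truth of $\varphi$. We therefore enumerate these finite data only. For each candidate, we check (modal), (rl), (id), (soc-m), (sic) and (ne) on those antecedents, and then extend to the full language as in the "other $\psi$" clause of the filtration definition.

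That extension must still satisfy (ne) for formulas $\psi$ with $[[\psi]]^{*}=\varnothing$. To guarantee this we would add to $W^{*}$ an anti-logical world whose valuation makes every formula outside $\Sigma_{\varphi}$ true and select it whenever the set is empty. This keeps the model finite and leaves the truth of formulas in $\Sigma_{\varphi}$ unchanged. With this, checking a candidate is a finite computation, and the theorem follows.
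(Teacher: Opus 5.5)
Your argument for the finite model property is the paper's: you refute $\varphi$ at $w_{\Gamma}$ in the generated submodel $M^{cg}$, filtrate through $\Sigma_{\varphi}$, and use Lemmas \ref{lem:welldefinescp1}, \ref{lem:filtration-lemma-scp1} and \ref{lem:filter-constraints-scp1}. For decidability you go further than the paper. The paper states the theorem without noting that $f$ and $V$ are infinite objects, so ``enumerate the finite models'' needs exactly your reduction to finite $\Sigma_{\varphi}$-data (and you should also check that $R_{M}$ is an equivalence relation on $N$). With the explicit bound, the parallel enumeration of theorems is unnecessary. By soundness on one side, and the filtration on the other, it suffices to check every candidate with at most $2^{|\Sigma_{\varphi}|}$ worlds.

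There is one slip in the extension step. Suppose the new anti-logical world $t$ makes true only the formulas outside $\Sigma_{\varphi}$. Take some $\psi\in\Sigma_{\varphi}$ that is not the antecedent of any conditional in $\Sigma_{\varphi}$ and is true at no world of the candidate, e.g.\ a classical contradiction occurring as a subformula of $\varphi$. Then selecting $t$ for $\psi$ violates (id).

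The fix is to let $t$ satisfy every formula of $\mathscr{L}$. Since $t\in I$, it lies in no $R_{M}$- or $R_{L}$-image and is never selected for an antecedent in $\Sigma_{\varphi}$, so no truth value at any other world changes. The third branch $[[\psi]]$ of the ``other $\psi$'' clause then always contains $t$, so (ne) holds with no special case.

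This is also what makes (ne) hold for non-antecedents in the paper's own $M^{*}$, although the paper only says ``analogously''. The class $\overline{\mathscr{L}}$ of the whole language is an anti-logical world, because its members contain both $\chi$ and $\neg\chi$ for some $\chi\in\Sigma_{\varphi}$. It also lies in $V^{*}(\psi)$ for every $\psi$.
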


\section{Application}

To clearly demonstrate the theoretical advantages of the stratified semantic framework, this section uses specific case analyses to illustrate how the $\mathbf{SCP}$ and $\mathbf{SCP1}$ systems formally and semantically distinguish between ordinary counterfactuals, anti-metaphysical (anti-essential) conditionals, and anti-logical conditionals. Furthermore, it highlights the necessity of $\mathbf{SCP1}$ in resolving the problem of vacuous truth.

Let $q$ be a metaphysical necessity, such as ``water is $H_2O$''. In the actual world $w \in N$, $\Box_{M}q$ is true.

\paragraph{Case 1: Ordinary Counterfactuals}
Let $p_1$ be ``I missed the bus''. This is a metaphysically possible hypothesis, meaning $M, w \Vdash \Diamond_{M}p_1$.

According to the $\mathbf{SCP}$ semantic constraint (soc-m), if $w \in [\Diamond_{M}p_1]$, the selection function is restricted to the metaphysically accessible domain, i.e., $f(w, p_1) \subseteq R_{M}(w)$. Since $\Box_{M}q$ holds, water is $H_2O$ in all worlds within $R_{M}(w)$. Therefore, $f(w, p_1) \subseteq [q]$ necessarily holds, and the conditional $p_1 > q$ (``If I missed the bus, water would still be $H_2O$'') is true. Metaphysical necessity is strictly preserved under ordinary counterfactual hypotheses.

\paragraph{Case 2: Anti-Metaphysical Conditionals}
Let $p_2$ be ``water is XYZ''. This is a metaphysically impossible but logically possible hypothesis, meaning $M, w \Vdash \neg\Diamond_{M}p_2 \land \Diamond_{L}p_2$.

Here, $\Diamond_{M}p_2$ is false, so the constraint (soc-m) is not triggered, and the selection function $f(w, p_2)$ must reach beyond $R_{M}(w)$. However, because the antecedent is logically possible ($\Diamond_{L}p_2$ is true), the constraint (sic) is activated, requiring $f(w, p_2) \subseteq N$. The selection function extends beyond the metaphysical domain into the logic-normal world $N$, where specific metaphysical laws may fail. In this broader space, $q$ is no longer universally true. This precisely corresponds to the $\mathbf{SCP}$ proposition $\Box_{M}\psi \nvDash \Diamond_{L}\varphi \to (\varphi > \psi)$. The system successfully allows us to suspend metaphysical laws while still using classical logic for meaningful deduction within $N$.

\paragraph{Case 3: Anti-Logical Conditionals}
Let $p_3$ be ``Hobbes squared the circle'', which is a logical impossibility, meaning $M, w \Vdash \neg\Diamond_{L}p_3$. Let $r$ be ``sick children in Afghanistan would be cured''.

Since $\Diamond_{L}p_3$ is false, the constraint (sic) protecting classical logical deduction is deactivated. The selection function $f(w, p_3)$ cannot find a world in $N$ where the antecedent is true, and thus must point to the anti-logical world set $I$. In Case 2, although the evaluation deviates from actuality, the worlds remain in $N$, enjoying the recursively defined truth conditions of classical logic. In Case 3, however, the worlds fall into $I$, where the truth values of complex formulas (such as conjunctions and disjunctions) are no longer evaluated recursively. Logical laws completely collapse here.

Although $\mathbf{SCP}$ establishes a refined stratified structure $N/R_M/I$, as a foundational framework, it permits the selection function $f$ to be empty. This exposes a theoretical vulnerability when handling Case 3. If we construct a specific $\mathbf{SCP}$ model where no world in the anti-logical domain $I$ satisfies ``Hobbes squared the circle'' (i.e., $f(w, p_3) = \varnothing$), then since the empty set is trivially a subset of any set, the conditional $p_3 > r$ would be vacuously true simply because its antecedent cannot be satisfied. This means $\mathbf{SCP}$ might still compromise with vacuism in certain local models.

The extended system $\mathbf{SCP1}$ introduces the non-emptiness constraint (ne), forcing $f(w, \varphi) \neq \varnothing$ for all normal worlds $w$ and formulas $\varphi$. When faced with $p_3$, $\mathbf{SCP1}$ forces the model to presuppose a sufficiently rich domain $I$ and select at least one world containing this logical contradiction for evaluation. In these selected anti-logical worlds, due to the lack of recursive logical constraints, $f(w, p_3)$ will not automatically be a subset of $[r]$ unless there is a substantive connection between ``squaring the circle'' and ``curing Afghan children'' in the specific valuation of the model. Consequently, $\mathbf{SCP1}$ can robustly evaluate $p_3 > r$ as false.

\section{Conclusion}

This paper develops a formal theory to distinguish impossibility through
the stratified counterpossible logic $\mathbf{SCP}$ and its extension
$\mathbf{SCP1}$. We establish the soundness, completeness, finite
model property, and decidability for both axiomatic systems.

To address the problem of vacuous truth, $\mathbf{SCP1}$ requires
the selection function to be non-empty for all antecedents, including
logical contradictions. By presupposing a robust set of anti-logical
worlds $I$, the system ensures that the truth values of counterpossibles
depend on the substantive connection between the antecedent and the
consequent.

The hierarchical framework proposed here is generalizable beyond metaphysical
and logical modalities. The relations $R_{M}$ and $R_{L}$ can be
applied to other modal hierarchies, such as distinguishing primary
from secondary obligations in deontic logic, or modeling cognitive
blind spots in epistemic logic. This layered approach offers a semantic
path for resolving modal conflicts and characterizing cognitive limitations.

Future research should further refine the internal structure of impossible
worlds. While this paper treats the anti-logical domain $I$ as a
space with minimal constraints, future work could classify $I$ using
specific non-classical logics, such as paraconsistent systems. This
would allow for a more precise characterization of the intermediate
domain between metaphysical impossibility and absolute logical inconsistency.

\paragraph*{Acknowledgments.} This work was supported by the MOE Project of Key Research Institute of Humanities and Social Sciences at Universities (Grant No. 22JJD520001).

\bibliographystyle{eptcs}
\nocite{*}
\bibliography{generic}

\end{document}